\crefname{section}{Sec.}{Secs.}
\Crefname{section}{Section}{Sections}
\Crefname{table}{Table}{Tables}
\crefname{table}{Tab.}{Tabs.}
\DeclareMathOperator{\sign}{sign}
\newcommand{\indep}{\perp \!\!\! \perp}
\def\Ncal{{\mathcal{N}}}
\def\Rbb{{\mathbb{R}}}
\def\xbm{{\bm{x}}}
\def\Ibf{{\mathbf{I}}}
\def\Expect{\mathrm{E}}
\def\sign{\mathrm{sign}}
\def\iter{\mathrm{iter}}
\def\dCov{\mathrm{dCov}}
\def\dCor{\mathrm{dCor}}
\def\Xb{{\bf X}}
\def\bC{{\bf C}}
\def\bU{{\bf U}}
\def\bV{{\bf V}}
\def\bQ{{\bf Q}}
\def\di{{\mathrm d}}
\def\bz{{\bf z}}
\def\Xb{{\bf X}}
\def\bX{{\bf X}}
\def\bY{{\bf Y}}
\def\bI{{\bf I}}
\def\bZ{{\bf Z}}
\def\eps{{\epsilon}}
\def\reals{{\mathbb R}}
\def\Dcal{\mathcal{D}}
\theoremstyle{plain}
\newtheorem{theorem}{Theorem}[section]
\newtheorem{proposition}[theorem]{Proposition}
\theoremstyle{definition}
\theoremstyle{remark}
\begin{document}

\articletype{Journal of Nonparametric Statistics}

\title{Robust Sufficient Dimension Reduction via $\alpha$-Distance Covariance
}

\author{
\name{Hsin-Hsiung Huang\textsuperscript{a} 
\thanks{CONTACT Hsin-Hsiung Huang. Email: hsin.huang@ucf.edu}
and Feng Yu\textsuperscript{b} and Teng Zhang\textsuperscript{c}\thanks{CONTACT Teng Zhang. Email: teng.zhang@ucf.edu}}
\affil{
\textsuperscript{a} Department of Statistics and Data Science, University of Central Florida, USA; \textsuperscript{b} Department of Mathematics, University of Minnesota Twin Cities, Minnesota, USA; 
\textsuperscript{c} Department of Mathematics, University of Central Florida, USA}
}

\maketitle

\begin{abstract}
We introduce a novel sufficient dimension-reduction (SDR) method which is robust against outliers using $\alpha$-distance covariance (dCov) in dimension-reduction problems. Under very mild conditions on the predictors, the central subspace is effectively estimated and model-free without estimating link function based on the projection on the Stiefel manifold. We establish the convergence property of the proposed estimation under some regularity conditions. We compare the performance of our method with existing SDR methods by simulation and real data analysis and show that our algorithm improves the computational efficiency and effectiveness. 
\end{abstract}

\begin{keywords}
$\alpha$-distance covariance; central subspace; sufficient dimension reduction (SDR); manifold learning; robust statistics
\end{keywords}

\section{Introduction}

In regression analysis, sufficient dimension reduction (SDR) provides a useful statistical framework to analyze a
high-dimensional dataset without losing any information. It finds the fewest linear combinations of predictors that
capture a full regression relationship. Let $Y$ be an univariate response and $X = (x_1,\ldots,x_p)^T$ be a $p\times 1$ predictor
vector, SDR aims to find a $p\times d$ matrix $\bm{\beta}$ such that
$$
Y \indep X\mid \bm{\beta}^TX
$$
which denotes the statistical independence. 

Sufficient dimension reduction (SDR) based on the conditional distribution of the response \citep{Li1991,cook1991discussion,xia2002adaptive,Yin11-AOS950}
provides the reduced predictors without loss of regression information. Recently, SDR methods using distance covariance (dCov) have been developed \citep{sheng2013direction,sheng2016sufficient}, and such methods do not need a constant covariance condition, or distribution assumptions on $X$,
$X\mid Y$ or $Y \mid X$. Therefore, it has broad applications for continuous and discrete variables from various distributions. Several robust sufficient dimension reduction methods have proposed for coefficient estimation such as the robust sufficient dimension reduction using the ball covariance \citep{zhang2019robust} and the
expected likelihood based method that minimizes the Kullback-Leiblier distance
\citep{yin2005direction,zhang2015direction}. 
In this article, we propose a robust estimation of sufficient dimension reduction the independence via the $\alpha$-distance covariance ($\alpha$-dCov) between the response and the predictors and develop a new
algorithm for estimating directions in general multiple-index models with a form 
$$
Y = g(\bm{\beta}^T X, \epsilon), 
$$ where $g$ is an unknown link function
\citep{yin2008successive,xia2008multiple,sheng2013direction}. The rest of this article is organized
as follows: Section 2 describes our robust $\alpha$-dCov method and a corresponding outlier detection method, including motivation, theoretical
results, estimation algorithm, and testing procedure. We introduce the consistency theorem in Section 3. Section 4 contains simulation and real data studies. We summarize our work in Section 5.

\subsection{Generalized distance covariance}\label{GDCOV}

Distance covariance \citep{10.1214/009053607000000505} is a popular dependence measure for two random vectors of possibly different dimensions and types. In recent years, there have been concentrated efforts in the literature to understand the distributional properties of the sample distance covariance in a high-dimensional setting, with an exclusive emphasis on the null case that $X$ and $Y$ are independent.
Distance covariance can be generalized to include powers of Euclidean distance. Define
\begin{align}\label{eq:nu}
\nu^{2}(X,Y;\alpha )&:=\mbox{E} [\|X-X'\|^{\alpha }\,\|Y-Y'\|^{\alpha }]+\mbox{E} [\|X-X'\|^{\alpha }]\,\mbox{E} [\|Y-Y'\|^{\alpha }]\nonumber\\&-2\mbox{E} [\|X-X'\|^{\alpha }\,\|Y-Y''\|^{\alpha }],
\end{align}
where $(X,Y),\; (X', Y'),\; (X'',Y'')$ are independent and identically distributed (i.i.d.) with respect to the joint distribution of $(X,Y)$ \citep{10.1214/14-AOS1255}. As discussed in \citep{10.1214/009053607000000505}, for every $0<\alpha <2$, $X$ and $Y$ are independent if and only if $\nu^{2}(X,Y;\alpha )=0$. When $\alpha=1$, it reduces to the classical distance covariance. When $0<\alpha<1$, it can be considered as a more robust version of distance covariance as it reduces the influence of large values of $\|X-X'\|$, $\|Y-Y'\|$, and $\|Y-Y''\|$ that might be contributed to outliers.

\subsection{Central Space Estimation via $\alpha$-dCov}

Let $(\bX, \bY) = \{
(X_i,Y_i):\; i = 1,\ldots,n\}$
be $n$ random samples from random variables $(X, Y)$. In addition, $\bX$ denotes a $p\times n$ data matrix whose columns are $X_1,\cdots,X_n$ and $\bY=[Y_1,\cdots,Y_n]$ denotes
a $1 \times n$ response data matrix. In this article, we consider univariate responses. However, the method can naturally be extended
to multivariate responses without any issue due to the nature of $\alpha$-dCov. The empirical solution of the SDR method based on $\alpha$-dCov for these $n$ observations relies on solving the following objective function \citep{10.1214/009053607000000505,sheng2016sufficient}:
\begin{align}
\max_{\bm{\beta}\in\mathbb{R}^{p\times d}}\nu^2_n(\bm{\beta}^T \bX,\bY,\alpha).
\label{eq1}
\end{align}
with constraint $\bm{\beta}^T\Sigma_{X}\bm{\beta}=\bI_{d}$ and $1\leq d\leq p$, where $\nu_n$ is the empirical version of $\nu$ defined in Equation~\eqref{eq:nu}.

The empirical distance dependence statistics $\nu_n$ is defined as follows. For $k,l = 1, \ldots , n$, we compute the Euclidean
distance matrices $(a_{kl}) = (|X_k - X_l |^{\alpha}_p)$ and $(b_{kl}) = (|Y_k -Y_l |^{\alpha})$ for $0<\alpha <2$ \citep{szekely2009brownian}. Define
$$
A_{kl} = a_{kl} - \bar{a}_{k\cdot} - \bar{a}_{\cdot l} + \bar{a}_{\cdot\cdot},\quad k,l = 1, \ldots , n,
$$
where
$$
\bar{a}_{k\cdot} = \frac{1}{n}\sum^n_{l=1}a_{kl},\quad
\bar{a}_{\cdot l},= \frac{1}{n}\sum^n_{k=1}a_{kl},\quad
\bar{a}_{\cdot\cdot} = \frac{1}{n^2}\sum^n_{k,l=1}a_{kl}.
$$
Similarly, define $B_{kl} = b_{kl} - \bar{b}_{k\cdot} - \bar{b}_{\cdot l} + \bar{b}_{\cdot\cdot}$, for $k, l = 1, \ldots , n$.
The nonnegative sample distance covariance $\nu_n(\bX,\bY)$ and
sample distance correlation $R_n(\bX,\bY)$ are defined by
\begin{equation}
\nu^2_n(\bm{\beta}^T\bX,\bY,\alpha) = \frac{1}{n^2}\sum^n_{k,l=1} A_{kl}B_{kl}
\label{def:alpha_dCOV}
\end{equation}
and
\begin{equation*}
  R^2_n(\bX,\bY,\alpha) =
    \begin{cases}
\frac{\nu^2_n(\bX,\bY,\alpha)}{\nu^2_n(\bX,\alpha)\nu^2_n(\bY,\alpha)},\;&
\mbox{ if }\nu^2_n(\bX,\alpha)\nu^2_n(\bY,\alpha) > 0;\\
      0 ,\;&
\mbox{ if }\nu^2_n(\bX,\alpha)\nu^2_n(\bY,\alpha) = 0,
    \end{cases}       
\end{equation*}
respectively, where the sample distance variance is defined by
$$
\nu^2_n(\bX,\alpha) := \nu^2_n(\bX,\bX,\alpha) = \frac{1}{n^2}
\sum^n_{k,l=1}
A^2_{kl} .
$$

Following \cite{wu2021mm}, we have the following equivalence.
Let $\bC = \hat{\Sigma}^{\frac12}_X \bm{\beta}$ and $\bZ = \hat{\Sigma}^{-\frac12}_X \bX,$ the target function~\eqref{eq1} can be rewritten as
\begin{align}\label{eq:dcov}
    \max_{\bC} \nu^2_n (\bC^T \bZ, \bY,\alpha) := \frac{1}{n^2}\sum^n_{k,l=1}a_{kl}(\bC)B_{kl},\mbox{ s.t. }\bC \in \mbox{St}(d, p),
\end{align}
where $a_{kl}(\bC) = \|\bC^T Z_k - \bC^T Z_l \|^{\alpha}$. We use the same notation 
$\mbox{St}(d, p) = \{\bC \in \mathbb{R}^{p\times d}\mid \bC^T\bC = I_d\}$ with $d \leq p$ is referred to the Stiefel
manifold and $T_{\bC} \mbox{St}(d, p)$ is the tangent space to $\mbox{St}(d, p)$ at a point $\bC\in \mbox{St}(d, p)$. 
We assume that $Y = g(\bC^T \bZ , \epsilon)$, where $\bC$ is a $p\times d$
matrix, $\epsilon$ is an unknown random error independent of $\bZ$, and $g$ is an unknown link function. We propose a new method to
estimate a basis of the central subspace $S_{Y \mid \bZ }= \mbox{Span}(\bC)$ and denote $\nu^2_n (\bC^T \bZ, \bY,\alpha)$ as $ F(\bC)$.

\section{Algorithm}
We develop an iterative algorithm based on the gradient descent algorithm on the Stiefel manifold. Here $P_S$ is a projection on the Stiefel manifold \citep{dalmau2017projection}.
By Proposition 3.4 (the projection onto Stiefel manifolds) of \cite{absil2012projection}, we
let $\bar{\bC} \in \mbox{St}(d,p)$ for any $\bC$
such that $\|X - \bar{X}\| < \sigma_d(\bar{\bC})$, where $\sigma_d(\bar{\bC})$ is the largest singular value, then the projection of $\bC$ onto $\mbox{St}(d,p)$ exists uniquely, and can be expressed as
$P_S(\bC) = \sum^d_{i=1}
u_iv_i^T$,
given by a singular value decomposition of $\bC$. Alternatively, let the SVD of $\bC\in\mathbb{R}^{p\times d}$ be $\bC=\bU\Sigma\bV$, then $P_S(\bC)=\bU\bV^T$.

\begin{algorithm}
  \caption{rSDR: robust SDR}
  \label{alg:matrix-reg-log}
  \begin{algorithmic}[1]
    \STATE {\bf Input:}  The samples $\{(y_i,\bZ_i), i=1,\cdots,n\}$, initial $\bC^{(0)}$.
    \STATE {\bf Initialization:} $\bC^{(0)}$.
    \FOR{$\iter=0,1,\cdots$}
        \STATE Let $\bC^{(\iter+1)}=P_S(\bC^{(\iter)}+\alpha^{(\iter)}_1 \partial_\bC F(\bC^{(\iter)}))$ or $\bC^{(\iter+1)}=P_S\Big(\bC^{(\iter)}+\alpha^{(\iter)}_1 \partial_\bC F(\bC^{(\iter)})(\bI-\bC^{(\iter)\,T}\bC^{(\iter)})\Big)$, where  $P_S(\cdot)$ is the projection on the Stiefel manifold and $\frac{\partial}{\partial \bC} F(\bC)$, and $\alpha^{(\iter)}_1$ is chosen by a line search. 
        \STATE Repeat steps 4 until $\|F(\bC^{(\iter)})-F(\bC^{(\iter-1)})\|_F \leq \eps_n$ where $\eps_n$ is a pre-specified threshold, or the number of iterations exceeds the upper limit: $\iter > N^{(\max)}$.
    \ENDFOR
    \STATE {\bf Output:} Estimated coefficients $\hat{\bC}$.
  \end{algorithmic}
\end{algorithm}

Now we derive the explicit formula for $\partial_\bC F(\bC)$, where $F(\bC)=\nu^2_n(\bC^T\bZ,Y)=\frac{1}{n^2}
\sum^n_{k,l=1}
a_{kl}(\bC)B_{kl}$. Recall that $a_{kl}(\bC)=\|\bC^T \bZ_k-\bC^T\bZ_l\|^{\alpha}$, the gradient is
\begin{equation}\label{eq:gradient}
\partial_\bC F(\bC)=\frac{\alpha}{n^2}
\sum^n_{k,l=1} \frac{\bC^T(\bZ_k-\bZ_l)(\bZ_k-\bZ_l)^T}{\|\bC^T \bZ_k-\bC^T \bZ_l\|^{2-\alpha}}B_{kl},
\end{equation}
and one may perform the manifold gradient descent algorithm as follows: 
\[
\bC^{(\iter+1)}=P_S\Big(\bC^{(\iter)}+\alpha^{(\iter)}_1 \partial_\bC F(\bC^{(\iter)})(\bI-\bC^{(\iter)}\bC^{(\iter)\,T})\Big).
\]
We remark that while there are various advanced Stiefel manifold optimization algorithms such as the ones based on the Cayley transform \citep{wen2012,zhu2019orthodr} or geodesics  \citep{absil2009optimization}, we applied the standard projected gradient descent algorithm as it is simpler to implementation and has the same order of computational cost per iteration of $O(p^2d)$. 

\textbf{Implementation issues} When implementing our approach, practical challenges may arise due to the potential for an extremely small denominator in Equation~\eqref{eq:gradient}, disproportionately amplifying the influence of the $(k,l)$-th term. To preemptively address this concern, we introduce a small positive regularization parameter, denoted as $\eta$. Subsequently, we employ a regularization technique on the objective function $F(\bC)$ such that the $(k,l)$-th term of the gradient in Equation~\eqref{eq:gradient} remains bounded. In particular, we apply it to the regularized objective function denoted as $F_{\eta}(\bC)$:
\[F_{\eta}(\bC)=\frac{1}{n^2}(\|\bC^T Z_k - \bC^T Z_l \|^2+\eta)^{\alpha/2}B_{kl},
\]
which leads to the regularized gradient formulation expressed as follows:
\begin{equation}\label{eq:gradient1}
\frac{\alpha}{n^2}
\partial_\bC F_{\eta}(\bC)=\sum^n_{k,l=1} \frac{\bC^T(\bZ_k-\bZ_l)(\bZ_k-\bZ_l)^T}{(\|\bC^T \bZ_k-\bC^T \bZ_l\|^2+\eta)^{(2-\alpha)/2}}B_{kl}.
\end{equation}

\section{Consistency Theory}
We consider a  model with a general noise term
\[\bY=g(\bm{\beta}_0^T\bX,\epsilon)=g(\bC_0^T\bZ,\epsilon),\] 
where $\bm{\beta}_0$ is a $p\times d$ orthogonal matrix, $g(\cdot)$ is an unknown link function, $\bC_0 = \hat{\Sigma}^{\frac12}_X \bm{\beta}_0$, and $\bZ = \hat{\Sigma}^{-\frac12}_X \bX$, and $\epsilon$ is independent of $\bZ$. This model includes the model from \cite{xia2002adaptive} that
\[\bY=g(\bm{\beta}_0^T \bX)+\epsilon\]
as a special example.

Following \cite{sheng2016sufficient},
we have the asymptotic properties of the estimator $\hat{\bC}$ that is consistent. The statement and the proof is similar to that of \cite{sheng2013direction}. It requires an additional assumption that depends on the decomposition of $X$ into two independent components, and some discussions on this condition are available in \cite[Section 3.2]{sheng2013direction}. For example, it is satisfied when $X$ is normal \citep{zhang2015direction}. In addition, this assumption also holds asymptotically 
 when $p$ is large \citep{hall1993almost}.

The following proposition establishes the asymptotic properties of our estimator $\bC$ up to some rotation matrix $\bQ$. This implies the asymptotic property of the estimated central subspace as it is invariant to the rotation matrix.
 
\begin{proposition}
Let $\bC \in \mathbb{R}^{d\times p}$ be a basis of the central subspace $S_{Y\mid X}$ with $\bC^T\Sigma_X\bC=I_d$. 
Suppose $P^T_{\bC (\Sigma_X)} X \indep Q^T_{\bC(\Sigma_X )}X$ 
and the support of $X \in \mathbb{R}^{d\times p}$, say $S$, is a compact set. In addition, assume that there exists $\bC'\in\mathbb{R}^{(p-d)\times p}$ such that $[\bC,\bC']^T\Sigma_X[\bC,\bC']=I_p$ and $\bC^TX$ is independent of $\bC'^TX$.
Let $\hat{\bC}=\arg\min_{\bC^T\Sigma_X\bC=I_d}\nu^2_n(\bC^T\bX,\bY)$, then there exists a rotation matrix $\bQ$: $\bQ^T\bQ=I_d$ such that
$\hat{\bC}\stackrel{P}{\to}\bC\bQ$ (convergence in probability) as $n \to\infty$.
\label{prop1}
\end{proposition}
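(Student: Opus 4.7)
The plan is to proceed in three standard steps: (i) Fisher consistency at the population level, showing that the functional $F_{\mathrm{pop}}(\bD) := \nu^2(\bD^T X, Y; \alpha)$, restricted to $\mathcal{M} := \{\bD \in \mathbb{R}^{p \times d} : \bD^T \Sigma_X \bD = I_d\}$, attains its extremum precisely on the rotation-orbit $\{\bC \bQ : \bQ^T \bQ = I_d\}$; (ii) uniform convergence of the sample objective $\nu^2_n(\bD^T \bX, \bY)$ to $F_{\mathrm{pop}}(\bD)$ over $\mathcal{M}$; and (iii) an argmax-continuity step to transfer (ii) to the optimizer. (I read the statement's ``argmin'' as ``argmax'', consistent with the objective~\eqref{eq1}.)

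For Fisher consistency I would exploit the basis $[\bC, \bC']$ with $[\bC, \bC']^T \Sigma_X [\bC, \bC'] = I_p$ guaranteed by the hypotheses. Any $\bD \in \mathcal{M}$ admits the decomposition $\bD = \bC \bQ_1 + \bC' \bQ_2$ with $\bQ_1^T \bQ_1 + \bQ_2^T \bQ_2 = I_d$, so writing $U := \bC^T X$ and $V := \bC'^T X$ gives $\bD^T X = \bQ_1^T U + \bQ_2^T V$ with $U \indep V$, while $Y = g(U, \epsilon)$ with $\epsilon \indep (U, V)$. Applying the Fourier/characteristic-function representation of $\nu^2(\cdot,\cdot;\alpha)$ and using the independence of $\bQ_2^T V$ from both $U$ and $Y$, the cross-characteristic function $\phi_{\bD^T X, Y}$ factors favorably and one obtains
\begin{equation*}
\nu^2(\bD^T X, Y; \alpha) \le \nu^2(\bQ_1^T U, Y; \alpha) \le \nu^2(U, Y; \alpha) = \nu^2(\bC^T X, Y; \alpha),
\end{equation*}
with equality in the first inequality only when $\bQ_2 = 0$ and in the second only when $\bQ_1$ is a rotation. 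This parallels Theorem~1 of~\cite{sheng2013direction}.

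For uniform convergence, the compact support of $X$ bounds all kernels $\|\bD^T x - \bD^T x'\|^\alpha$ uniformly over $\mathcal{M}$; note also that $\mathcal{M}$ is itself compact via the change of variables $\bC = \Sigma_X^{1/2} \bD$, which sends $\mathcal{M}$ onto the Stiefel manifold. Pointwise convergence of the $V$-statistic $\nu^2_n \to \nu^2$ is classical \citep{10.1214/009053607000000505}, and a stochastic-equicontinuity argument using a H\"older bound on $\bD \mapsto \|\bD^T x - \bD^T x'\|^\alpha$ (uniform on the compact domain) upgrades this to uniform convergence. Combining Fisher consistency with uniform convergence, the standard argmax theorem forces every subsequential limit of $\hat{\bC}$ to lie on the orbit $\{\bC \bQ : \bQ^T \bQ = I_d\}$; choosing $\bQ$ to realize the distance from $\hat{\bC}$ to this orbit yields $\hat{\bC} - \bC \bQ \to 0$ in probability.

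The main obstacle I expect is the strictness of the two inequalities in step (i): without strictness, the population argmax could contain spurious directions and step (iii) would collapse. Strictness requires combining both independence ingredients ($\epsilon \indep (U, V)$ and $U \indep V$) with the characterization $\nu^2(W, Y; \alpha) = 0$ iff $W \indep Y$ for $\alpha \in (0, 2)$. Tracking precisely when saturation occurs --- in particular ruling out subspaces that only partially overlap with $\mathrm{Span}(\bC)$ --- is the delicate part, and is exactly where the decomposition assumption $\bC^T X \indep \bC'^T X$ is indispensable.
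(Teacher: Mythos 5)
Your proposal follows essentially the same route as the paper's own proof: the same decomposition of a candidate $\bD$ into its component in $\mathrm{Span}(\bC)$ and a residual in $\mathrm{Span}(\bC')$ (the paper writes $\bm{\beta}=\bm{\beta}_1+\bm{\beta}_2$ where your $\bC\bQ_1$ is its $\bm{\beta}_1$), the same characteristic-function factorization of $\nu^2$ using $\bC^TX \indep \bC'^TX$, and the same two-inequality chain with equality only on the rotation orbit. The only difference is that you spell out the uniform-convergence and argmax-continuity steps explicitly, whereas the paper handles the empirical-to-population passage by citing the convergence argument of \cite{zhang2015direction}; your rendering of that step is a more careful version of the same idea, not a different approach.
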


\begin{proof}
Following \cite[(4.1)]{szekely2009brownian}, we have that for random variables $X$ and $Y$ from $\reals^{p_1}$ and $\reals^{p_2}$,
\[
\nu^2(X,Y,\alpha)=C\int_{t,s}\frac{|f_{X,Y}(t,s)-f_X(f)f_Y(t)|^2}{\|t\|^{p_1+\alpha}\|s\|^{p_2+\alpha}}\di t\di s,
\] 
where $f_{X}$, $f_{Y}$, $f_{X,Y}$ represent the characteristic functions of $X$, $Y$, and $(X,Y)$ respectively.

The rest follows from the proof of Proposition~1 in \cite{zhang2015direction}. For any $\bm{\beta}\neq\bC$ that satisfies $\bm{\beta}^T\Sigma_X\bm{\beta}=\bI$, let $\bm{\beta}_1$ be the projection of $\bm{\beta}$ to the subspace spanned by $\bC$ with an inner product induced by $\Sigma_X$ (that is, $\Sigma_X^{0.5}\bm{\beta}_1$ being the projection of $\Sigma_X^{0.5}\bm{\beta}$ to the subspace spanned by $\Sigma_X^{0.5}\bC$ under the Euclidean metric) and $\bm{\beta}_2=\bm{\beta}-\bm{\beta}_1$. Then since $\Sigma_X^{0.5}\bm{\beta}$ and $\Sigma_X^{0.5}\bC$ are both orthogonal subspaces, we have $\|\bC^{\dagger}\bm{\beta}_1\|=\|(\Sigma_X^{0.5}\bC)^{\dagger}(\Sigma_X^{0.5}\bm{\beta}_1)\|\leq \|(\Sigma_X^{0.5}\bC)^{\dagger}(\Sigma_X^{0.5}\bm{\beta})\|\leq 1$, where ${\dagger}$ represents the pseudo inverse. Note that $\bm{\beta}_1$ and $\bC$ have the same column space, so for any $\bz\in\reals^p$, we have
\begin{equation}\label{eq:comparebc}
\|\bm{\beta}_1\bz\|\leq \|\bC\bz\|.
\end{equation}  
Then we proved that $\bC$ is the solution to $\arg\min_{\bC^T\Sigma_X\bC=I_d}\nu^2(\bC^T\bX,\bY)$ asymptotically:
\begin{align*}
&\nu^2(\bm{\beta}^TX,Y,\alpha)=\int|\Expect e^{i\langle t,\bm{\beta}^TX\rangle+i\langle s, Y \rangle}-\Expect e^{i\langle t,\bm{\beta}^TX\rangle}\Expect e^{i\langle s, Y \rangle}|^2\big/(\|t\|^{d+\alpha}\|s\|^{1+\alpha})\di t\di s\\
=&\int|\Expect e^{i\langle t,\bm{\beta}_2^TX\rangle}|^2|\Expect e^{i\langle t,\bm{\beta}_1^TX\rangle+i\langle s, Y \rangle}-\Expect e^{i\langle t,\bm{\beta}_1^TX\rangle}\Expect e^{i\langle s, Y \rangle}|^2\big/(\|t\|^{d+\alpha}\|s\|^{1+\alpha})\di t\di s\\
\leq &\int|\Expect e^{i\langle t,\bm{\beta}_1^TX\rangle+i\langle s, Y \rangle}-\Expect e^{i\langle t,\bm{\beta}_1^TX\rangle}\Expect e^{i\langle s, Y \rangle}|^2\big/(\|t\|^{d+\alpha}\|s\|^{1+\alpha})\di t\di s\\
 = &\nu^2_n(\bm{\beta}_1^TX,Y,\alpha)\leq \nu^2(\bC^TX,Y,\alpha),
\end{align*}
where the last step follows from Equation~\eqref{eq:comparebc}. It is easy to verify that the equality only holds when $\bm{\beta}=\bC Q$ for some rotation matrix $Q$.

It remains to shows that $\nu^2_n(\bC^T \bX, \bY,\alpha)$ is the empirical estimate of the random variable $\nu^2(\bC^T X, Y,\alpha)$, which means that $\nu^2_n(\bC^T\bX,\bY,\alpha)\stackrel{a.s.}{\to}\nu^2(\bC^T\bX,\bY,\alpha)$ (almost sure convergence) as $n \to\infty$. The result holds following the proof of Lemma 2 in the supplementary material of \cite{zhang2015direction}. 
\end{proof}

\subsection{Convergence analysis}

We investigate the convergence property of the proposed algorithm in this section. In fact, the proposed algorithm generates solutions that converge
to a stationary point of $F_{\eta}(\bC)$ as $t\to\infty$. In addition, the algorithm converges to the solution when well-initialized. 

\begin{theorem}
(a) Any accumulation point of the sequence $\left\{\hat{\bC}^{(t)}\right\}_{t\geq 0}$ generated by the proposed algorithm converges is a stationary point of $F_{\eta}(\bC)$ over the Stiefel manifold.

%
(b) If in addition, the global maximizer $\hat{\bC}$ it is the unique stationary point in its neighborhood $\mathcal{N}$, and  $F_{\eta}(\bC)-F_{\eta}(\hat{\bC})\leq -c\|\bC-\hat{\bC}\|_F^2$ for any $\bC$ in $\mathcal{N}$ and some $c>0$. Then when the initialization $\hat{\bC}^{(0)}$ is sufficiently close to $\hat{\bC}$, the sequence $\left\{\hat{\bC}^{(t)}\right\}_{t\geq 0}$ converges to $\hat{\bC}$. 
\end{theorem}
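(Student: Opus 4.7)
My plan is to frame the algorithm as standard retraction-based Riemannian gradient ascent on the compact Stiefel manifold $\mbox{St}(d,p)$, and then invoke the classical two-step convergence recipe: the Riemannian gradients vanish along the iterates, and a trapping argument pins the iterates near a locally isolated strict maximizer. The essential regularity comes from the regularization parameter $\eta>0$: the denominator in \eqref{eq:gradient1} is bounded below by $\eta^{(2-\alpha)/2}$, so $F_\eta$ is $C^1$ with Lipschitz gradient on an open Euclidean neighborhood of $\mbox{St}(d,p)$. Compactness of $\mbox{St}(d,p)$ then yields global upper/lower bounds on $F_\eta$ and a uniform bound on $\|\partial_\bC F_\eta\|$, and the projection $P_S$ described before Algorithm~\ref{alg:matrix-reg-log} is a smooth first-order retraction in a tubular neighborhood of the manifold.

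\textbf{Part (a).} I would first verify that the line-search rule in step~4 of Algorithm~\ref{alg:matrix-reg-log}, interpreted as an Armijo/backtracking rule along the retracted ascent direction, guarantees a sufficient-ascent inequality of the form
\[
F_\eta(\hat{\bC}^{(t+1)}) - F_\eta(\hat{\bC}^{(t)}) \geq c\,\|\mathrm{grad}\,F_\eta(\hat{\bC}^{(t)})\|_F^2,
\]
where $\mathrm{grad}\,F_\eta$ denotes the Riemannian gradient (the projection of $\partial_\bC F_\eta$ onto $T_{\bC}\,\mbox{St}(d,p)$). This step combines the standard quadratic upper bound coming from Lipschitz smoothness of $F_\eta$ with the first-order retraction property of $P_S$, which makes the error between a tangent step and its projected version second-order in the step size. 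Summing the telescoping inequality and using $\sup_{\bC\in\mbox{St}(d,p)} F_\eta < \infty$ gives $\sum_{t\geq 0}\|\mathrm{grad}\,F_\eta(\hat{\bC}^{(t)})\|_F^2 < \infty$, so $\mathrm{grad}\,F_\eta(\hat{\bC}^{(t)})\to 0$. Since $\{\hat{\bC}^{(t)}\}\subset \mbox{St}(d,p)$ is relatively compact and $\mathrm{grad}\,F_\eta$ is continuous, every accumulation point is a Riemannian stationary point of $F_\eta$.

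\textbf{Part (b) and main obstacle.} For the local convergence statement, the quadratic-growth hypothesis $F_\eta(\bC)-F_\eta(\hat{\bC})\leq -c\|\bC-\hat{\bC}\|_F^2$ on $\mathcal{N}$ simultaneously provides a local error bound and a trapping mechanism. Choose $r>0$ small enough that $\{\bC\in\mbox{St}(d,p):\|\bC-\hat{\bC}\|_F\leq r\}\subset\mathcal{N}$, and take an initialization $\hat{\bC}^{(0)}$ close enough to $\hat{\bC}$ that $F_\eta(\hat{\bC})-F_\eta(\hat{\bC}^{(0)})<cr^2$. An easy induction using the monotone ascent from Part (a) keeps the iterates inside $\mathcal{N}$: as long as $\hat{\bC}^{(t)}\in\mathcal{N}$,
\[
c\,\|\hat{\bC}^{(t)}-\hat{\bC}\|_F^2 \leq F_\eta(\hat{\bC})-F_\eta(\hat{\bC}^{(t)}) \leq F_\eta(\hat{\bC})-F_\eta(\hat{\bC}^{(0)}) < cr^2,
\]
so $\hat{\bC}^{(t+1)}\in\mathcal{N}$ as well. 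By Part (a), every accumulation point of $\{\hat{\bC}^{(t)}\}$ is a stationary point of $F_\eta$ in $\mathcal{N}$, and by the uniqueness assumption this point must equal $\hat{\bC}$; relative compactness then upgrades ``every accumulation point equals $\hat{\bC}$'' to full-sequence convergence $\hat{\bC}^{(t)}\to\hat{\bC}$. I anticipate the main technical obstacle to be matching the abstract sufficient-ascent bound to the concrete projection-based update: one must check (i) that $P_S$ is well-defined at each trial point, which follows from step sizes chosen by the line search staying small relative to the uniform bound on $\|\partial_\bC F_\eta\|$, and (ii) that the second-order retraction error is strictly dominated by the first-order ascent --- the latter being exactly where the Lipschitz-gradient property of $F_\eta$, itself an artifact of $\eta>0$, is indispensable.
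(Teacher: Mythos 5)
Your proposal follows essentially the same route as the paper: part (a) rests on the monotone ascent guaranteed by the line search plus continuity of the gradient, and part (b) is the same sublevel-set trapping argument combining the quadratic-growth bound with uniqueness of the stationary point; your write-up is in fact more careful than the paper's (the paper asserts stationarity of accumulation points with only the remark that ``otherwise the objective would continue to increase,'' whereas you supply the standard sufficient-ascent and telescoping argument).

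One step in your part (b) induction is stated too quickly: from $\hat{\bC}^{(t)}\in\mathcal{N}$ and the chain $c\|\hat{\bC}^{(t)}-\hat{\bC}\|_F^2\leq F_\eta(\hat{\bC})-F_\eta(\hat{\bC}^{(0)})<cr^2$ you conclude ``so $\hat{\bC}^{(t+1)}\in\mathcal{N}$,'' but this does not follow from monotonicity alone --- the next iterate could in principle leave $\mathcal{N}$ for a distant region of higher objective value, where the quadratic-growth hypothesis no longer applies and the argument cannot be closed. The missing ingredient, which the paper supplies explicitly, is that the gradient of $F_\eta$ is continuous and vanishes at $\hat{\bC}$, so on a sufficiently small sublevel neighborhood the gradient (and hence the line-search step length) is uniformly small; this is what forces $\hat{\bC}^{(t+1)}$ to stay inside $\mathcal{N}$, after which monotonicity puts it back in the trapped sublevel set. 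You gesture at step-size control in your ``main obstacle'' paragraph, but only in connection with the well-definedness of $P_S$; it needs to be invoked at this point of the induction as well. With that patch (e.g., shrink $r$ so the gradient norm times the maximal step size is below $r/2$ on the $r$-ball, and initialize within the $c(r/2)^2$ sublevel set), your argument is complete and matches the paper's.
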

\begin{proof} 
(a) Due to the line search strategy in Algorithm~\ref{alg:matrix-reg-log}, the objective value of the objective function is monotonically nondecreasing and as a result, $\nu^2_n(\hat{\bC}^{(t)\,T}\bX,\bY,\alpha)$ converges. Let $\tilde{\bC}$ be any accumulation point of the sequence $\hat{\bC}^{(t)}$, then $\nabla_{\bC}\nu^2_n(\bC^{T}\bX,\bY,\alpha)|_{\bC=\hat{\bC}^{(t)}}=0$, since otherwise the objective function will continue to increase. 

(b) Since the gradient of $F_{\eta}(\bC)$ is continuous, $\max_{\bC:\|\bC-\hat{\bC}\|_F\leq \epsilon}\|F_{\eta}(\bC)\|$ converges to zero as $\epsilon\rightarrow 0$. As a result, we may choose $\epsilon'>0$ such that for 
\[
\mathcal{N}_{\epsilon'}=\mathcal{N}\cap \{F_{\eta}(\bC)-F_{\eta}(\hat{\bC})>-\epsilon'\},
\]
and any $\hat{\bC}^{(t)}\in \mathcal{N}_{\epsilon'}$, $\|\hat{\bC}^{(t)}-\hat{\bC}\|_F\leq \sqrt{\epsilon'/c}$ and the gradient $F_{\eta}'(\hat{\bC}^{(t)})$ is so small such that the next iteration $\hat{\bC}^{(t+1)}$ remains in $\mathcal{N}$. Since the functional value $F_{\eta}(\hat{\bC}^{(t)})$ is nonincreasing, $\hat{\bC}^{(t+1)}$ lies in $\mathcal{N}_{\epsilon'}$ as well. As $\hat{\bC}$ is the unique stationary point in $\mathcal{N}_{\epsilon'}$, part (a) implies that the algorithm converges to $\hat{\bC}$.


\end{proof}
\section{Numerical Studies}

In this section, we perform a comparative analysis of several algorithms including the proposed robust SDR (rSDR), the SQP algorithm \citep{sheng2013direction},the MMRN algorithm~\citep{wu2021mm} and the HSIC algorithm~\citep{zhang2015direction}.

The problem in Equation~\eqref{eq:dcov} is nonlinear and the proposed algorithm, rSDR, needs an good initialization. The solutions of the sliced inverse regression (SIR, \cite{li1991sliced}) and the directional regression (DR, \cite{li2007directional}) are used in the initialization of \Cref{alg:matrix-reg-log}. Let $\bm\beta_1$ and $\bm{\beta}_2$ be two solutions of SDR obtained by SIR and DR, respectively. We select one of $\bm\beta_1$ and $\bm{\beta}_2$ with larger dCov as our initial value of $\bm\beta$. Let $\hat{\bm{\Sigma}}$ be the sample covariance of $\{\xbm\}_{i=1}^n$. The initial matrix $\bC^{(0)}=\hat{\bm{\Sigma}}^{1/2}_X\bm{\beta}$ is evaluated in \Cref{alg:matrix-reg-log}.

The proposed algorithm has an parameter $\alpha$ which governs robustness to outliers. A smaller $\alpha$ usually enhances the robustness of \Cref{alg:matrix-reg-log}. However, an excessively small $\alpha$ often results in numerous local minimum values for the problem. Therefore, $\alpha$ is tuned through 5-fold cross-validation.
 The value of $\alpha$ is fine-tuned from $\{i/10\}_{i=1}^{9}$ by 5-fold CV. Specifically, we partition the datasets $\{(y_i,\xbm_i)\}_{i=1}^n$ into training and validation sets. For each $\alpha$ value, we apply \Cref{alg:matrix-reg-log} to the training set, yielding a subspace $\bm{\beta}_\alpha$. We then assess the $0.5$-dCov of the validation set. This process is repeated for all 5 folds, and the average $0.5$-dCov is computed. We choose the $\alpha$ value associated with the highest average and execute \Cref{alg:matrix-reg-log} again to derive the estimated subspace. It is important to note that if the dataset is contaminated with outliers, the validation set will also contain outliers. Traditional dCov or covariance calculations may be significantly impacted by these outliers. Therefore, opting for a more robust variance statistic is crucial. In this context, we select the $0.5$-dCov as the measure for the test set.

The SQP algorithm utilizes sequential quadratic programming to solve the dCov-based SDR model (equivalent to Equation~\eqref{eq:dcov} with $\alpha=1$). While the SQP method performs well when the dimension ($p$) and sample size ($n$) are relatively small, it becomes computationally difficult for moderately high-dimensional settings~\citep{wu2021mm}. MMRN was later proposed as an efficient alternative to solve the same model using Riemannian Newton’s method. Both SQP and MMRN correspond to rSDR with $\alpha=1$, but none of them is robust against outliers. The Hilbert-Schmidt Independence Criterion (HSIC) method~\citep{zhang2015direction} addresses the single-index SDR model ($d=1$) by maximizing the HSIC covariance between $\bm{\beta}^T\bX$ and $\bY$.

In the first simulation, we compare rSDR with SQP and MMRN in both robust and non-robust settings. Our results demonstrate that rSDR with a smaller $\alpha$ can effectively estimate the underlying subspace and efficiently solve the SDR model. Additionally, even in the presence of outliers in the response, rSDR can still estimate the subspace accurately, while SQP and MMRN fail to do so.

In the second simulation, we explore the application of rSDR in outlier detection. By reducing the data dimension, we extend a dCor-based outlier detection method~\citep{wang2017outlier} to high-dimensional cases. We compare rSDR with PCA in dimensionality reduction and outlier detection to showcase the applicability of robust SDR in outlier detection.

Furthermore, we present three real data examples: the New Zealand horse mussels, cardiomyopathy microarray data, and auto MPG data. In the New Zealand horse mussels dataset, we reduce the data dimension to 1 and compare rSDR with HSIC. Notably, HSIC is only applicable when $d=1$, so we do not include it in other simulations or real data examples.

\subsection{Simulation Data}

Let $\tilde{\beta}_1=(1,0,0,\cdots,0)^T,\tilde{\beta}_2=(0,1,0,\cdots,0)^T,\tilde{\beta}_3=(1,0.5,1,\cdots,0)^T$ be three $p$-dimensional vectors. We further rotate the vectors $\tilde{\beta}_i$ by a random rotation matrix $R_d\in SO(p)$ (the special orthogonal group of dimension $p$), i.e., $\beta_i=R_d^\top\tilde{\beta}_i$. We consider the following three models
\begin{enumerate}
    \item[(A)] $Y = (\beta_1^TX)^2+(\beta_2^TX)+0.1\epsilon$, 
    \item[(B)] $Y = \sign(2\beta_1^TX+\epsilon_1)\times\log|2\beta_2^TX+4+\epsilon_2|$, 
    \item[(C)] $Y = \exp(\beta_3^TX)\epsilon$,
\end{enumerate}
where $X\in\reals^p$ follows from (1) $\Ncal(0,\Ibf)$ and (2) $U[-2,2]^p$ and $\epsilon,\epsilon_1,\epsilon_2$ are standard normal distributed. We analyze the principal angles between the true subspace $\bm{\beta}$ and the estimated subspace $\hat{\bm\beta}$ obtained using different SDR methods, namely rSDR, MMRN, and SQP. To further investigate the robustness of these methods, we introduce additional noise by adding the response with a value of $50\times \bm{1}^TX$ with a probability of 0.1. We then calculate the principal angles between the true subspace and the estimated subspaces in this robust setting. Both simulation scenarios are conducted for two settings: $(n,p)=(100,6)$ and $(n,p)=(500,20)$. We repeat the simulations 100 times and report the mean and standard deviation of the principal angles for both the non-robust and robust cases in Tables \ref{table:1} and \ref{table:2}, respectively. It is worth noting that the underlying subspace for model (A) and (B) is represented by $\bm{\beta}=[\beta_1,\beta_2]$, resulting in a value of $d=2$. On the other hand, the underlying subspace for model (C) is represented by $\bm{\beta}=\beta_3$, resulting in a value of $d=1$.


From Table~\ref{table:1} we observe that rSDR performs better than MMRN and SQP in model (A) and (B) even without outliers. MMRN converges faster than SDR in model (A) and (B). When $(n,p)=(500,20)$, rSDR and MMRN are faster than SQP. Table~\ref{table:2} reports the principal angles and execution time of the three estimators in the scenario where the outliers present. Table~\ref{table:2} shows that the principal angles between the true subspace and the estimated subspace produced by rSDR are smaller than MMRN and SQP which implies that rSDR is more robust. Moreover, rSDR converges faster than MMRN and SQP in most settings; particularly in model (C). 

\begin{table}
\centering
\tbl{The mean and standard deviation (in parentheses) of the principal angles and the running times (seconds) over 100 repetitions of SQP, MMRN and rSDR in nonrobust settings.}{%
\begin{tabular}{lcccccccc} \toprule
\multirow{2}{*}{$(n,p)$} & \multirow{2}{*}{Model} & \multicolumn{2}{c}{SQP} & \multicolumn{2}{c}{MMRN} & \multicolumn{2}{c}{rSDR} \\ \cmidrule(lr){3-4} \cmidrule(lr){5-6} \cmidrule(lr){7-8}
& & \multicolumn{1}{c}{Angle} & \multicolumn{1}{c}{Time(s)} & \multicolumn{1}{c}{Angle} & \multicolumn{1}{c}{Time(s)} & \multicolumn{1}{c}{Angle} & \multicolumn{1}{c}{Time(s)} \\ \midrule
\multirow{6}{*}{(100,6)}  & A(1)                   & 0.27(0.09) & 0.16(0.12) & 0.27(0.09)  & 0.19(0.09) & 0.27(0.09)  & 0.17(0.13) \\
                          & A(2)                   & 0.25(0.08) & 0.13(0.13) & 0.25(0.08)  & 0.15(0.12) & 0.25(0.08)  & 0.18(0.13) \\
                          & B(1)                   & 0.28(0.09) & 0.10(0.02) & 0.28(0.09)  & 0.19(0.12) & 0.28(0.09)  & 0.20(0.14) \\
                          & B(2)                   & 0.22(0.08) & 0.11(0.04) & 0.22(0.08)  & 0.32(0.59) & 0.21(0.08)  & 0.23(0.18) \\
                          & C(1)                   & 0.20(0.07) & 0.24(0.32) & 0.20(0.07)  & 0.25(0.11) & 0.19(0.06)  & 0.08(0.05) \\
                          & C(2)                   & 0.32(0.12) & 0.14(0.24) & 0.31(0.12)  & 0.38(0.17) & 0.32(0.12)  & 0.08(0.05) \\
\midrule
\multirow{6}{*}{(500,20)} & A(1)                   & 0.24(0.04) & 2.98(0.56) & 0.24(0.04)  & 0.90(0.14) & 0.24(0.04)  & 1.41(0.77) \\
                          & A(2)                   & 0.23(0.04) & 3.33(3.11) & 0.23(0.04)  & 0.90(1.36) & 0.23(0.04)  & 1.65(0.80) \\
                          & B(1)                   & 0.24(0.04) & 3.17(0.63) & 0.24(0.04)  & 0.90(0.14) & 0.24(0.04)  & 1.54(0.91) \\
                          & B(2)                   & 0.19(0.03) & 4.55(1.29) & 0.19(0.03)  & 0.81(0.13) & 0.18(0.03)  & 1.56(0.79) \\
                          & C(1)                   & 0.16(0.03) & 2.54(0.22) & 0.16(0.03)  & 1.52(0.26) & 0.17(0.03)  & 0.66(0.42) \\
                          & C(2)                   & 0.25(0.04) & 3.27(0.71) & 0.25(0.04)  & 4.04(1.23) & 0.28(0.05)  & 0.72(0.49) \\
\bottomrule
\end{tabular}}
\label{table:1}
\end{table}

\begin{table}
\centering
\tbl{The mean and standard deviation (in parentheses) of the principal angle, and the running time (seconds) over 100 repetitions of SQP, MMRN and rSDR in robust settings.}{%
\begin{tabular}{lcccccccc} \toprule
\multirow{2}{*}{$(n,p)$} & \multirow{2}{*}{Model} & \multicolumn{2}{c}{SQP} & \multicolumn{2}{c}{MMRN} & \multicolumn{2}{c}{rSDR} \\
\cmidrule(lr){3-4} \cmidrule(lr){5-6} \cmidrule(lr){7-8}
& & \multicolumn{1}{c}{Angle} & \multicolumn{1}{c}{Time(s)} & \multicolumn{1}{c}{Angle} & \multicolumn{1}{c}{Time(s)} & \multicolumn{1}{c}{Angle} & \multicolumn{1}{c}{Time(s)} \\ \midrule
\multirow{6}{*}{(100,6)}  & A(1)                   & 0.51(0.28) & 0.20(0.21)   & 0.49(0.27) & 0.28(0.20)  & 0.32(0.12)  & 0.19(0.13) \\
                          & A(2)                   & 0.45(0.28) & 0.13(0.12)   & 0.44(0.27) & 0.26(0.19)  & 0.27(0.11)  & 0.17(0.12) \\
                          & B(1)                   & 0.52(0.26) & 0.13(0.08)   & 0.51(0.26) & 0.32(0.26)  & 0.33(0.12)  & 0.18(0.13) \\
                          & B(2)                   & 0.42(0.22) & 0.12(0.07)   & 0.42(0.22) & 0.25(0.22)  & 0.24(0.08)  & 0.21(0.17) \\
                          & C(1)                   & 0.39(0.24) & 0.19(0.21)   & 0.38(0.23) & 0.32(0.16)  & 0.26(0.10)  & 0.08(0.06) \\
                          & C(2)                   & 0.47(0.23) & 0.17(0.27)   & 0.46(0.22) & 0.51(0.25)  & 0.40(0.16)  & 0.09(0.06) \\
\midrule
\multirow{6}{*}{(500,20)} & A(1)                   & 0.82(0.30) & 4.12(1.29)   & 0.82(0.30) & 2.92(1.47)  & 0.25(0.04)  & 1.47(0.87) \\
                          & A(2)                   & 0.93(0.42) & 16.26(42.21) & 0.92(0.42) & 6.85(13.70) & 0.24(0.04)  & 1.43(0.85) \\
                          & B(1)                   & 0.91(0.35) & 4.20(1.48)   & 0.90(0.35) & 3.36(2.23)  & 0.26(0.04)  & 1.74(0.92) \\
                          & B(2)                   & 0.60(0.36) & 5.32(2.26)   & 0.60(0.36) & 2.84(3.54)  & 0.19(0.03)  & 1.64(0.99) \\
                          & C(1)                   & 0.35(0.14) & 3.02(0.34)   & 0.35(0.14) & 4.41(2.15)  & 0.22(0.04)  & 0.63(0.46) \\
                          & C(2)                   & 0.89(0.28) & 4.53(1.01)   & 0.85(0.29) & 13.39(6.27) & 0.35(0.07)  & 0.88(0.61) \\
\bottomrule
\end{tabular}}
\label{table:2}
\end{table}

\subsection{Outlier Detection Simulation Studies}\label{sec:out-det}


Our proposed SDR method can be effectively utilized for outlier detection. \cite{wang2017outlier} introduced a novel outlier detection measure based on the distance correlation (dCor) given by
\begin{align}\label{eq:Di}
\Dcal_i(\Xb,\bY) = \frac{1}{p}\sum_{k=1}^p\left(\dCor(\Xb_k,\bY)-\dCor(\Xb^{(i)}_k,\bY^{(i)}) \right)^2,
\end{align}
where $\dCor(\Xb_k,\bY)$ represents the dCor between the $k$-th predictor and the response $\bY$. The dCor between $\bX$ and $\bY$ is defined as
\begin{align*}
\dCor^2(\bX,\bY) = \frac{\dCov^2(\bX,\bY)}{\sqrt{\dCov^2(\bX,\bX)\dCov^2(\bY,\bY)}}.
\end{align*}
It is evident that if the $i$-th data point $(\Xb^{(i)}_k,\bY^{(i)})$ exhibits a high value of the measure $\hat{\Dcal}_i$, it is more likely to be an outlier observation. The method employs a bootstrap procedure to determine the threshold $\hat{F}_{\gamma}$. At a given significance level $\gamma$, the $i$-th observation is identified as an outlier if $\hat{\Dcal}_i>\hat{F}_{\gamma}$, where $\hat{F}_{\gamma}$ represents the upper $\gamma$-th quantile of the cumulative distribution function of $\Dcal_i$ under the null hypothesis. Specifically, a bootstrap sample $\Dcal^{[b]}_i$ is formed by drawing with replacement from ${1,\cdots,n}$, denoted as ${i^{[b]}_{(1)},\cdots,i^{[b]}_{(n)} }$, and an estimator $\hat{\Dcal}^{[b]}_i$ is computed for each sample. The threshold $\hat{F}_{\gamma}$ is determined by calculating the upper $\gamma$-th quantile of the cumulative distribution function of $\hat{\Dcal}^{[b]}_i$.

The algorithm proposed by \cite{wang2017outlier}, which is based on the outlier detection measure defined in Equation~\eqref{eq:Di}, involves calculating the covariance distance between $\bX$ and $\bY$ in each dimension and with the removal of each sample. As a result, its computational complexity is $O(pn^3)$, where the computation of dCov requires pairwise distance calculations between the columns of $\bX$ and $\bY$. A natural approach to enhance their method is to reduce the dimensionality of the dataset $\bX$. Their method can be naturally extended to detect outlier locations by computing
\begin{align}\label{eq:Di-pca}
\Dcal_i(\Bar{\Xb},\bY) = \frac{1}{p}\sum_{k=1}^p\left(\dCor(\Bar{\Xb}_k,\bY)-\dCor(\Bar{\Xb}^{(i)}_k,\bY^{(i)}) \right)^2,
\end{align}
where $\Bar{\Xb}\in\reals^{d\times n}$ is the $d$-dimensional data obtained by dimension reduction. Nevertheless, the conventional approach to dimension reduction is unsuitable in the presence of outliers. Therefore, we employ the robust SDR as a means to both reduce the data's dimensionality and identify outlier positions. For the sake of comparison, we also implement principal component analysis (PCA) \citep{wold1987principal} for dimension reduction.

We consider an autoregressive correlation structure with $\bm{\Sigma}=(\rho_{j,k})_{p\times p}=0.5^{|j-k|}$ and generate the data as follows: $X_i$ follows a multivariate normal distribution $\Ncal(0,\bm{\Sigma})$, and the linear model is defined as $Y_i=X_i\beta+\epsilon_i$, where $\beta=(1,1,1,1,1,0,\cdots,0)^T$ and $\epsilon_i\sim\Ncal(0,1)$. We have a total of $n=100$ samples, and among them there are $10$ outliers. The outliers are generated using $\kappa_i=X_i\gamma$, where $\gamma=(0,0,0,0,0,1,1,\cdots)$. We did four sets of simulations for various values of $p=200,400,800,1000$. To test the hypothesis of whether the $i$-th observation is influential or not, we employ a bootstrap procedure and utilize a threshold rule to determine whether an individual is an outlier. We evaluate the performance of this outlier identification procedure by comparing the receiver operating characteristic (ROC) curves. 

The ROC curves are depicted in Figure~\ref{fig:roc-out-det}. In the figure, the curve labeled as `PCA-2' represents the ROC curve generated by $\bar{\bX}^{PCA}$ with a dimensionality of $d=2$, while the curve labeled as `rSDR-0.2-2' corresponds to the curve produced by $\bar{\bX}^{DR}$ with $\alpha=0.2$ and $d=2$. Similarly, the remaining labels follow similar settings. It can be observed that the curves generated by rSDR with $d=3$ consistently surpass those produced by rSDR with $d=2$, and both outperform the curves generated by PCA. This suggests that the proposed rSDR method effectively captures the underlying structure of the data, and the resulting transformed data $\bar{\bX}^{DR}$ can be utilized for outlier detection. Notably, despite the true subspace being two-dimensional, $\bar{\bX}^{DR}$ with $d=3$ outperforms its two-dimensional counterpart. We speculate that the higher dimensionality preserves more information due to the presence of outliers.

\begin{figure}[ht]
\centering
\subfloat[$p=200$]{%
\resizebox*{6cm}{!}{\includegraphics{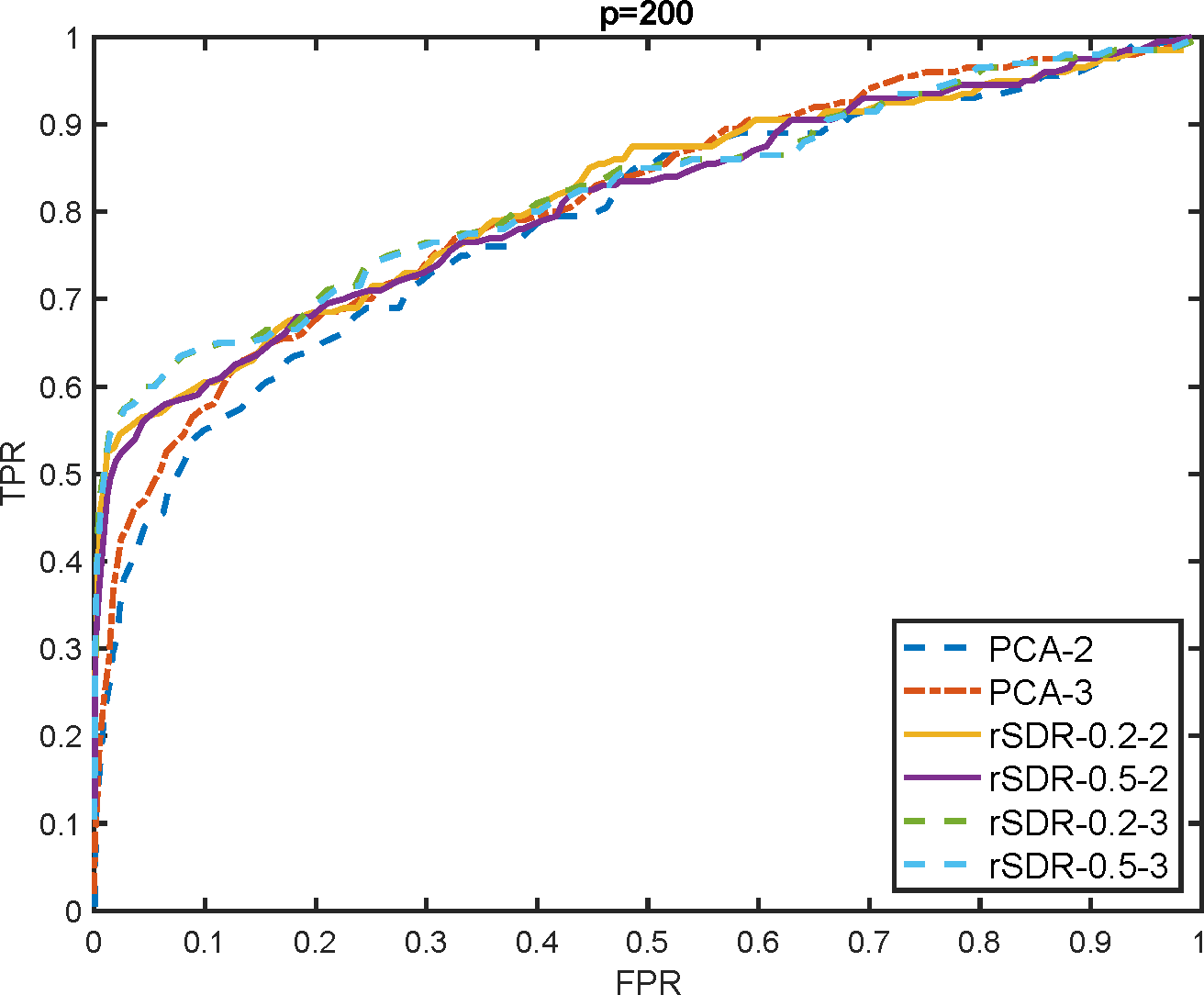}}}\hspace{5pt}
\subfloat[$p=400$]{%
\resizebox*{6cm}{!}{\includegraphics{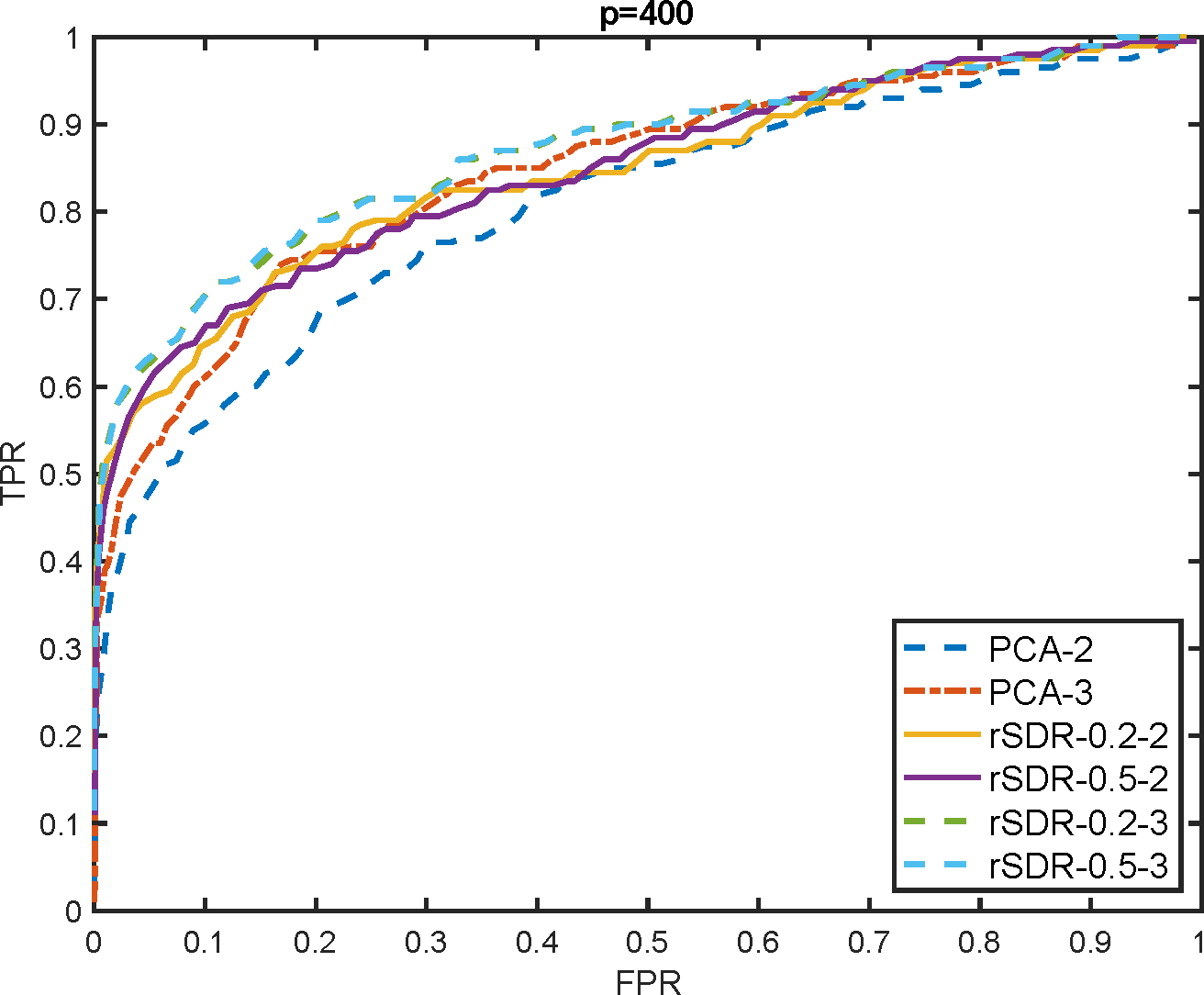}}}\hspace{5pt}
\subfloat[$p=800$]{%
\resizebox*{6cm}{!}{\includegraphics{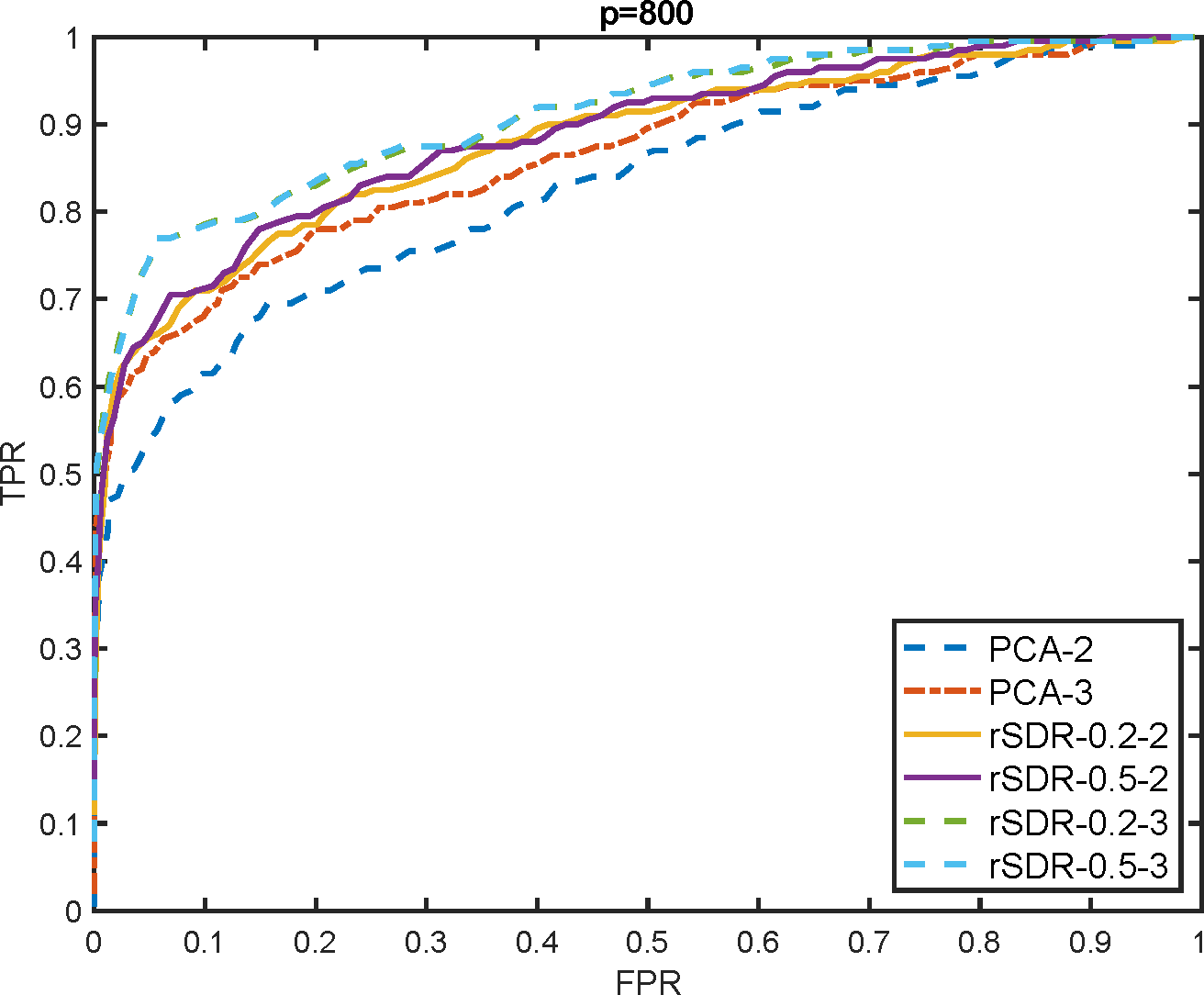}}}\hspace{5pt}
\subfloat[$p=1000$]{%
\resizebox*{6cm}{!}{\includegraphics{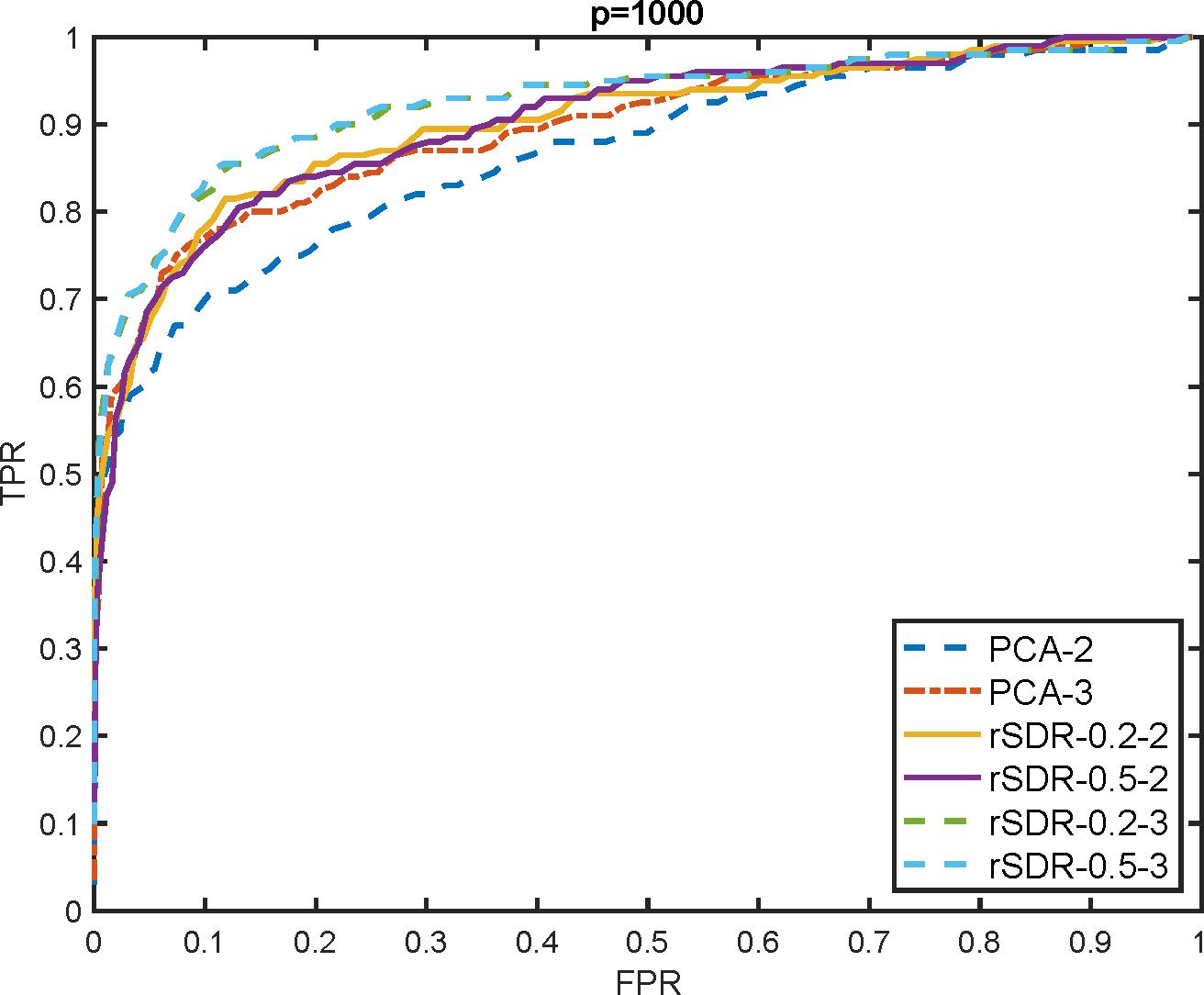}}}\hspace{5pt}
\caption{ROC curves of outlier detection. The proposed robust SDR method with $\alpha=0.5$ with projection dimension $3$ has the highest ROC in these four simulation sample size settings.}
\label{fig:roc-out-det}
\end{figure}

\subsection{Real Data Example: New Zealand Horse Mussels}

A sample of 201 horse mussels (Modiolus modiolus) was collected at 5 sites in the Marlborough Sounds at the Northeast of New Zealand's South Island and this dataset was discussed by \cite{cook2009regression}. The response variable is muscle mass $M$, the edible portion of the mussel, in grams. The quantitative predictors are all related to characteristics of the mussel shells: shell width $W$ (in mm), shell height $H$ (in mm), shell length $H$ (in mm) and shell mass $S$ (in grams). 

To process the data, a nonlinear transformation of the predictors was recommended by \cite{cook2009regression} as $X = (L,W^{0.36},S^{0.11})$. Each column of the data $X$ is further standardized by $\Tilde{X} = (\frac{L-\hat{\mu}_L}{\hat{\sigma}(L)},\frac{W^{0.36}-\hat{\mu}_{W^{0.36}}}{\hat{\sigma}(W^{0.36})},\frac{S^{0.11}-\hat{\mu}_{S^{0.11}}}{\hat{\sigma}(S^{0.11})})$ where $\hat{\mu}_{\cdot}$ is the sample mean and $\hat{\sigma}(\cdot)$ is the sample standard deviation, since $L$ is on a larger scale than the other predictors. Consequently, the predictors will have mean $0$ and variance $1$. The rSDR model with $d=1$ would be appropriate to model this dataset, as shown in  Figure~\ref{fig:new-zealand-mussel}, where we fit two second-degree polynomial regression models of the single index $\hat{\beta}^T\Tilde{X}$ by rSDR with $\alpha=0.2$ and $\alpha=1$. We compare our method rSDP with $\alpha=0.2$ and $\alpha=1$, SQP and the Hilbert-Schmidt Independence Criterion (HSIC) method, proposed by \cite{zhang2015direction} for solving the special case of the SDR model, namely $d=1$. Table~\ref{table:mussel} provides the estimated bases $\beta$ from these four methods. The estimates of SDR with $\alpha=1$ and SQP are similar, and this result is expected since SQP and rSDR with $\alpha=1$ solve the same model with different algorithms. The estimated $\hat{\beta}$ from all four methods indicate that the standardized shell mass predictor, $\tilde{X_3}$, is more significant than the other two predictors while the rSDR with $\alpha=0.2$ produces a smaller value in the coefficient of $\tilde{X_3}$. However, rSDR with $\alpha=0.2$ produces a model with a slightly larger R-squared value than the other methods, which implies a better fit of the dataset. 

\begin{figure}[ht]
\centering
\subfloat[rSDR-0.2]{%
\resizebox*{6cm}{!}{\includegraphics{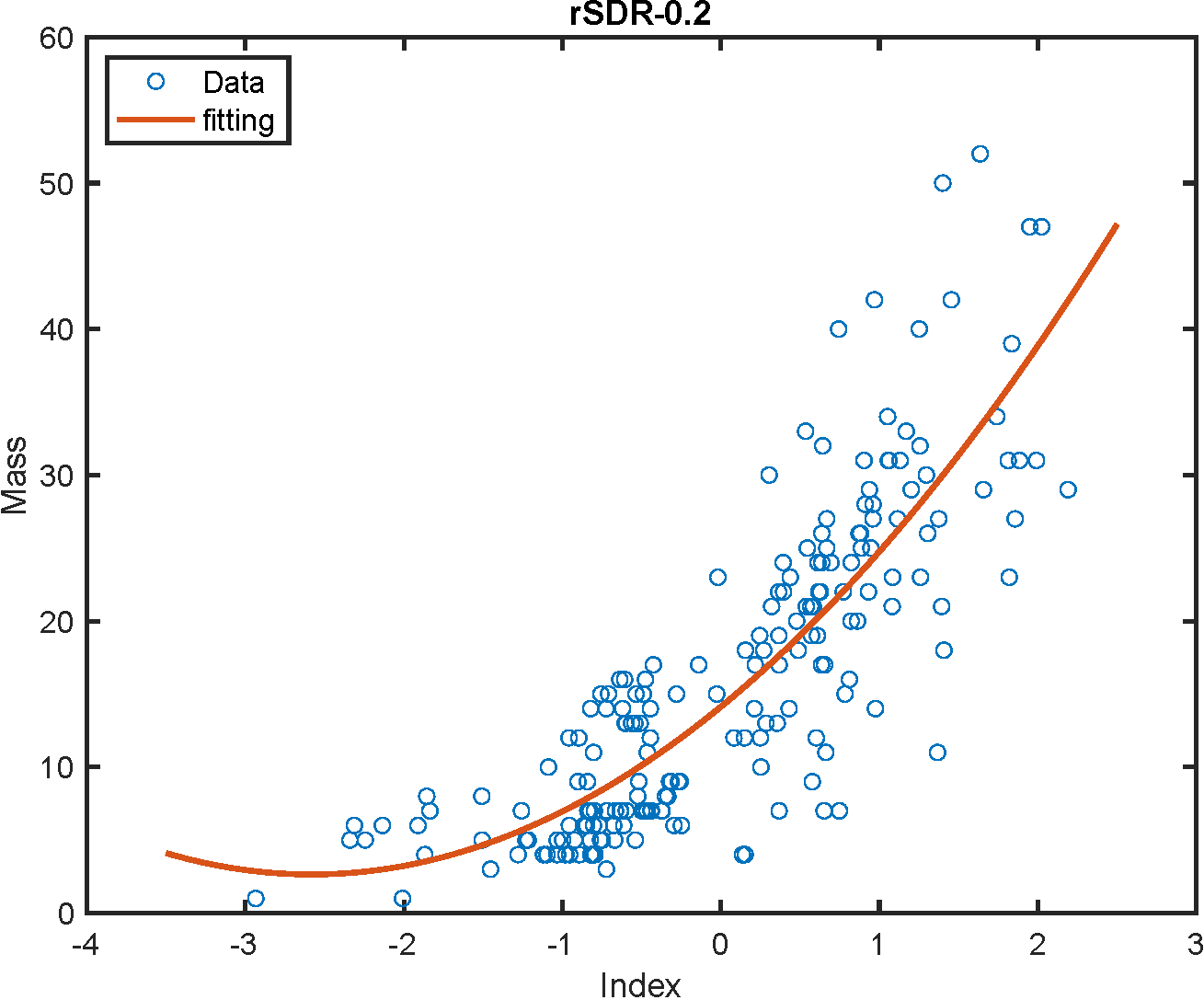}}}\hspace{5pt}
\subfloat[rSDR-1]{%
\resizebox*{6cm}{!}{\includegraphics{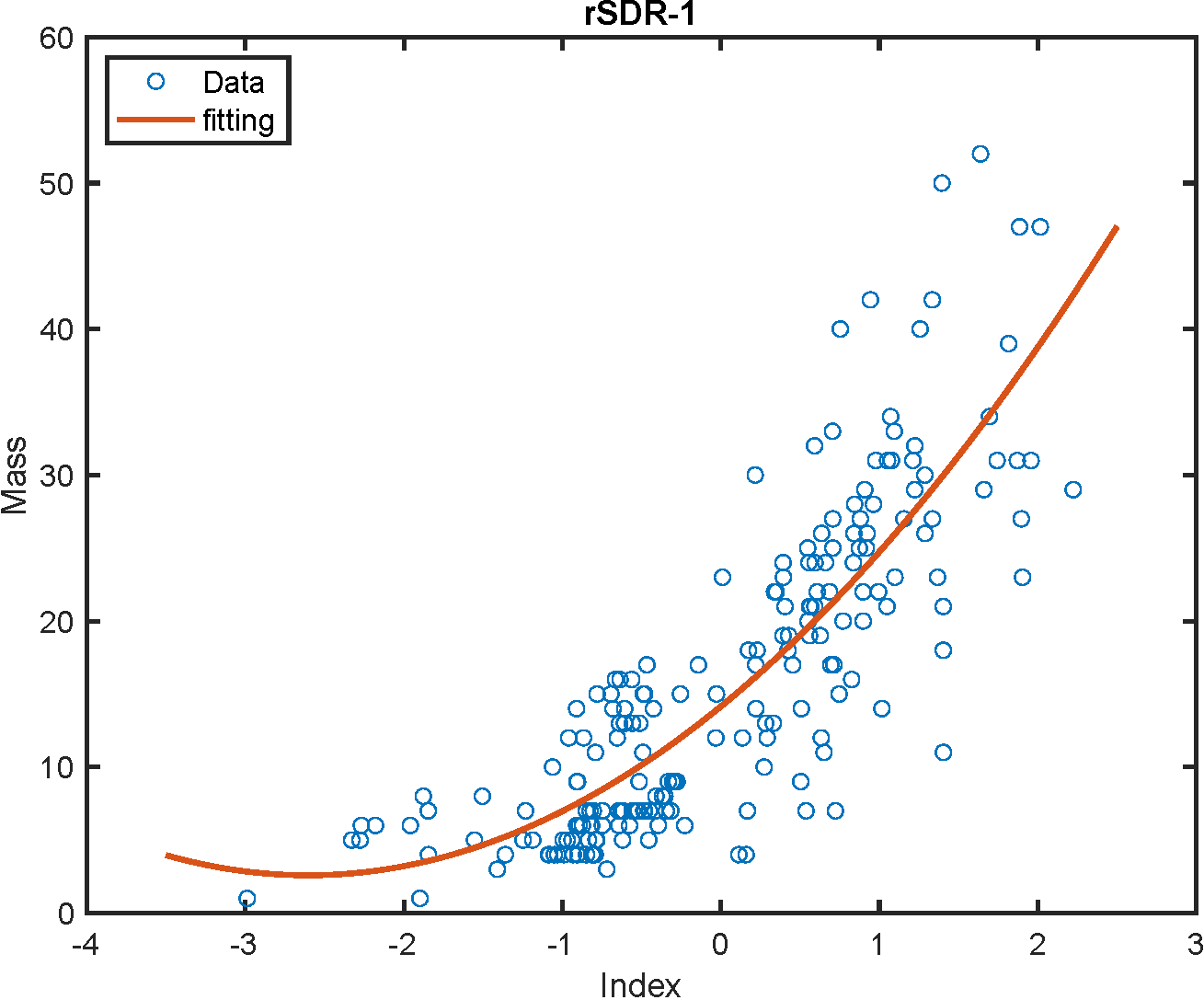}}}\hspace{5pt}
\caption{The second-degree polynomial fitting of the single-index model in the New Zealand Horse Mussels data using rSDR ($\alpha=0.2$) (left) and rSDR ($\alpha=1$) (right).}
\label{fig:new-zealand-mussel}
\end{figure}

\begin{table}[t!]
\centering
\tbl{Estimated bases $\hat{\beta}=[\hat{\beta}_1,\hat{\beta}_2,\hat{\beta}_3]^T\in\Rbb^3$ of the central subspace in the New Zealand Horse Mussels data from various methods and their adjusted R-squared values.}{%
\begin{tabular}{ccccc} \toprule
Method             & rSDR ($\alpha=0.2$) & rSDR ($\alpha=1$) & HSIC    & SQP     \\ \midrule
$\hat{\beta}_1$ & 0.2871              & 0.1832            & 0.1897  & 0.1831  \\ 
$\hat{\beta}_2$ & 0.0872              & -0.0270           & -0.0604 & -0.0269 \\ 
$\hat{\beta}_3$ & 0.6391              & 0.8510            & 0.9800  & 0.8509  \\
Adjusted R-squared &0.7026& 0.6979& 0.6962& 0.6979\\ \bottomrule
\end{tabular}}
\label{table:mussel}
\end{table}

\subsection{Real Data Example: Cardiomyopathy Microarray Data}

The cardiomyopathy microarray dataset consists of 30 samples and 6319 predictors, originally used by \cite{segal2003regression} to evaluate regression-based approaches for microarray analysis. The focus of many researchers, \cite{zou2008regularized} and \cite{li2012feature}, has been to investigate the relationship between the overexpression of a G protein-coupled receptor (Ro1) in mice and the 6319 associated genes. However, due to the high dimensionality of the data compared to the limited number of samples, the sample covariance matrix is not invertible. To address this issue, several methods have been proposed, including SIS (Sure Independence Screening, \cite{fan2008sure}), DCSIS (Distance Correlation SIS, \cite{li2012feature}), BCSIS (Ball Correlation SIS, \cite{pan2019generic}), and SDRLS (Sequential Dimension Reduction for Large $p$ Small $n$ problem, \cite{yin2015sequential}). While SIS, DCSIS, and BCSIS are feature screening methods that rank predictors based on a utility measure, they may not be robust against outliers. Specifically, a set of predictors $\mathcal{A}=\{i\mid U(X_i,Y)>\tau,i=1,\cdots,n\}$ is determined for some threshold $\tau$ and pre-selected utility measure $U$. SDRLS takes a different approach. SDRLS partitions the data set into $X=[X_1,X_2]$ with $\dim(X_1)<n$ and applies the SDR model on $(X_1,[X_2,Y])$ to obtain $R(X_1)$. The dimension of $R(X_1)$ is chosen some integer that is smaller than $\dim(X_1)<n$ and thus a new predictor $[R(X_1),R_2]$ is obtained with a smaller dimension. SDRLS iteratively repeats this process to achieve a dimension smaller than the number of samples.

In this experiment, we utilized the SDRLS method to reduce the dimensionality of the cardiomyopathy microarray data and assess the rSDR against heavy-tailed predictors. The final dimension of the dataset was reduced to $p=19$, while the dimension of the central subspace was set to $d=2$. The central subspace is denoted as $\bm{\beta}=[\beta_1,\beta_2]$. Indexes derived from this reduction were obtained by projecting the processed cardiomyopathy microarray dataset $\bX$ onto the subspaces: $\bZ_1=\beta_1^T\bX$ and $\bZ_2=\beta_2^T\bX$. We performed linear and nonlinear regression to model the response variable ``Ro1'' using predictors $\bZ_1$ and $\bZ_2$. In the nonlinear model, we introduced squared terms ($\bZ_1^2$, $\bZ_2^2$) and an interaction term ($\bZ_1\times \bZ_2$) in addition to the linear model. The regression results are presented in Table~\ref{tab:CM}. The findings demonstrate that our proposed method, rSDR, with a smaller value of $\alpha$, outperforms the non-robust version ($\alpha=1$) in both linear and nonlinear models.

\begin{table}[t!]
\centering
\renewcommand{\arraystretch}{1.2}
\tbl{Adjusted R-squared and F-value of models from SQP, rSDR in Cardiomyopathy Microarray dataset.}{%
\begin{tabular}{ccccc} \toprule
Adjusted R-squared    & rSDR ($\alpha=0.2$) & rSDR ($\alpha=0.5$) & rSDR ($\alpha=1$) \\ \midrule
Linear              & 0.826               & 0.817               & 0.804             \\
Nonlinear          & 0.882               & 0.871               & 0.867             \\ \midrule
F-value              & rSDR ($\alpha=0.2$) & rSDR ($\alpha=0.5$) & rSDR ($\alpha=1$) \\ \midrule
Linear           & 70.1                & 65.8                & 60.6              \\
Nonlinear       & 44.4                & 40.3                & 38.9              \\ \bottomrule     
\end{tabular}}
\label{tab:CM}
\end{table}

\subsection{Real Data Example: Auto MPG data}
We also employ the auto fuel economy data to illustrate the advantage of our rSDR method. The dataset contains city-cycle fuel consumption in miles per gallon (MPG) and 7 predictors: cylinders, displacement, horsepower, weight, acceleration, model year and origin. As suggested in \cite{sheng2016sufficient}, we avoid using ``origin'', because it correlates with ``cylinders'' closely. Missing values are deleted, and 392 observations are left for study. In order to investigate the city-cycle fuel consumption in miles per gallon, we assume that this data set fits a sufficient dimension reduction model. As shown in Figure~\ref{fig:mpg-boxplot}, there exist outliers in ``horsepower'' and ``acceleration''. ``cylinders'' and ``displacement'' that are not normally distributed. Therefore, rSDR is appropriate for this data set. 

Following the suggestion of \cite{sheng2016sufficient}, we use the dimension $d=2$ of the central subspace. Let the subspace be $\bm{\beta}=[\beta_1,\beta_2]$ and the auto MPG data be denoted as $\bX$ where each column is centered and scaled to make the variance as $1$. The following procedures are similar to those done in the cardiomyopathy microarray data. The indexes are derived by rSDR, after the linear and nonlinear regression models are constructed to measure the goodness of fit of the two pair of indexes to ``mpg''. In Table~\ref{tab:MPG}, the adjusted R-squared and F-value of the linear model produced by rSDR with $\alpha=0.2$ are larger than other non-robust models however it does not show much superior in the nonlinear regression model (Figure~\ref{fig:mpg-plots}). 

\begin{figure}
    \centering
    \includegraphics[width=0.6\textwidth]{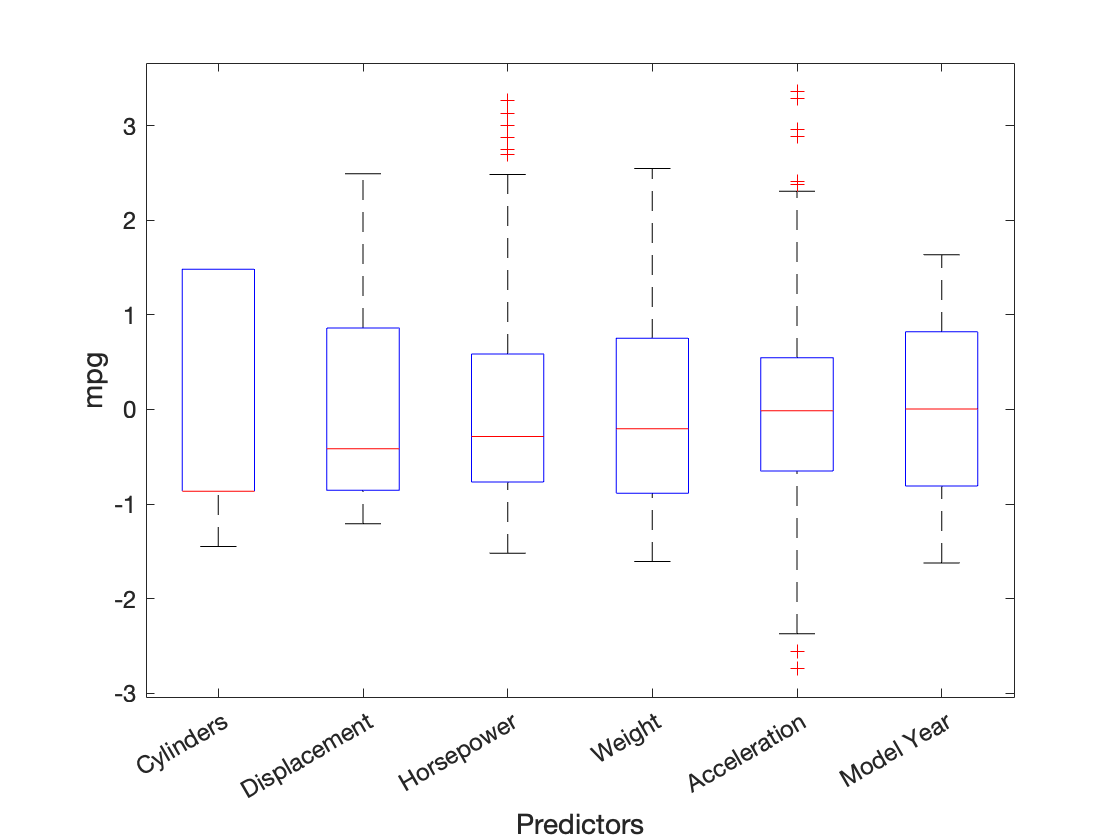}
    \caption{Boxplots of predictors in the auto mpg data set. Some predictors such as ``horsepower'' and ``acceleration'' have outlying observations.}
    \label{fig:mpg-boxplot}
\end{figure}

\begin{figure}[ht]
\centering
\subfloat[SQP-index-1]{%
\resizebox*{6cm}{!}{\includegraphics{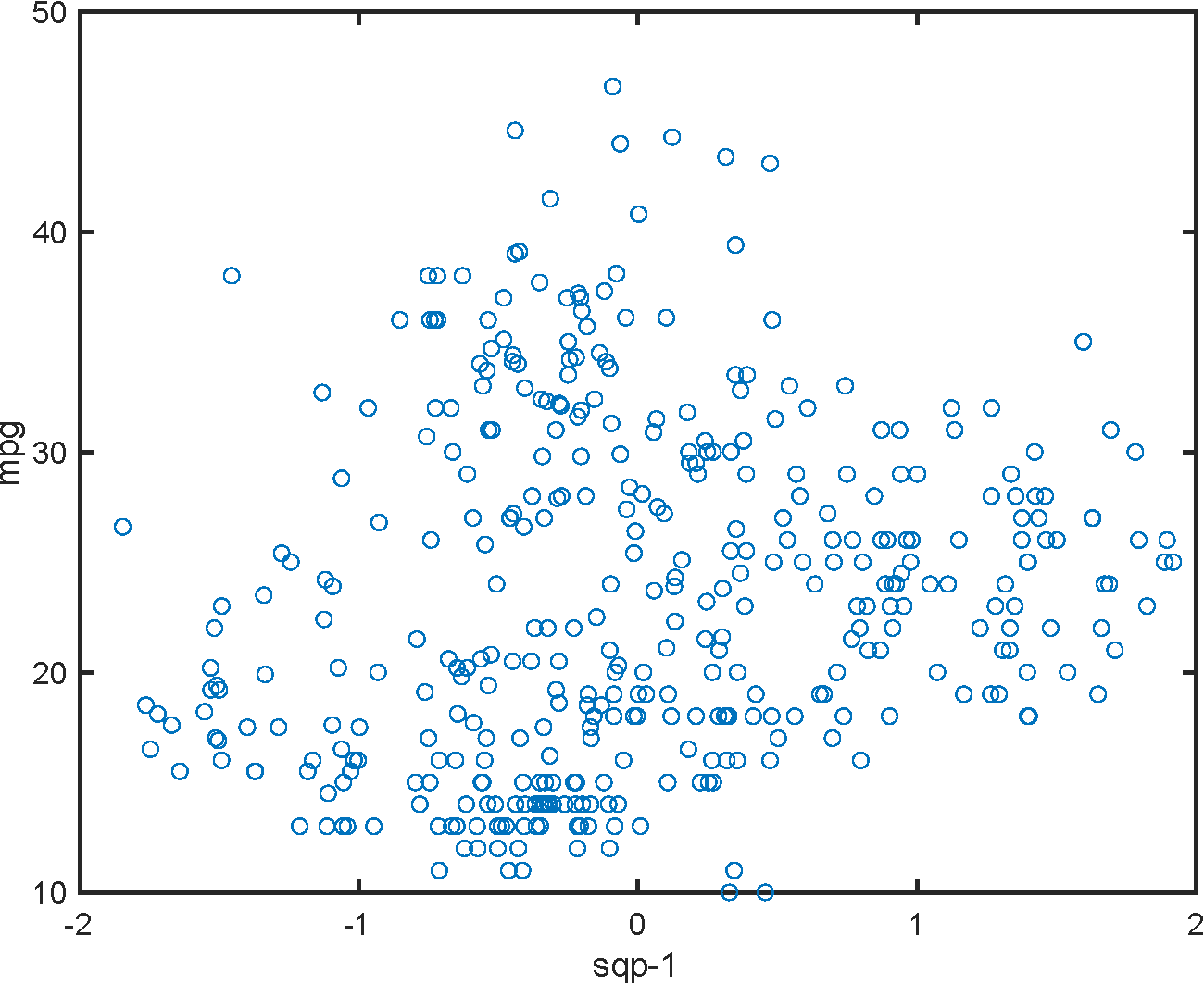}}}\hspace{5pt}
\subfloat[SQP-index-2]{%
\resizebox*{6cm}{!}{\includegraphics{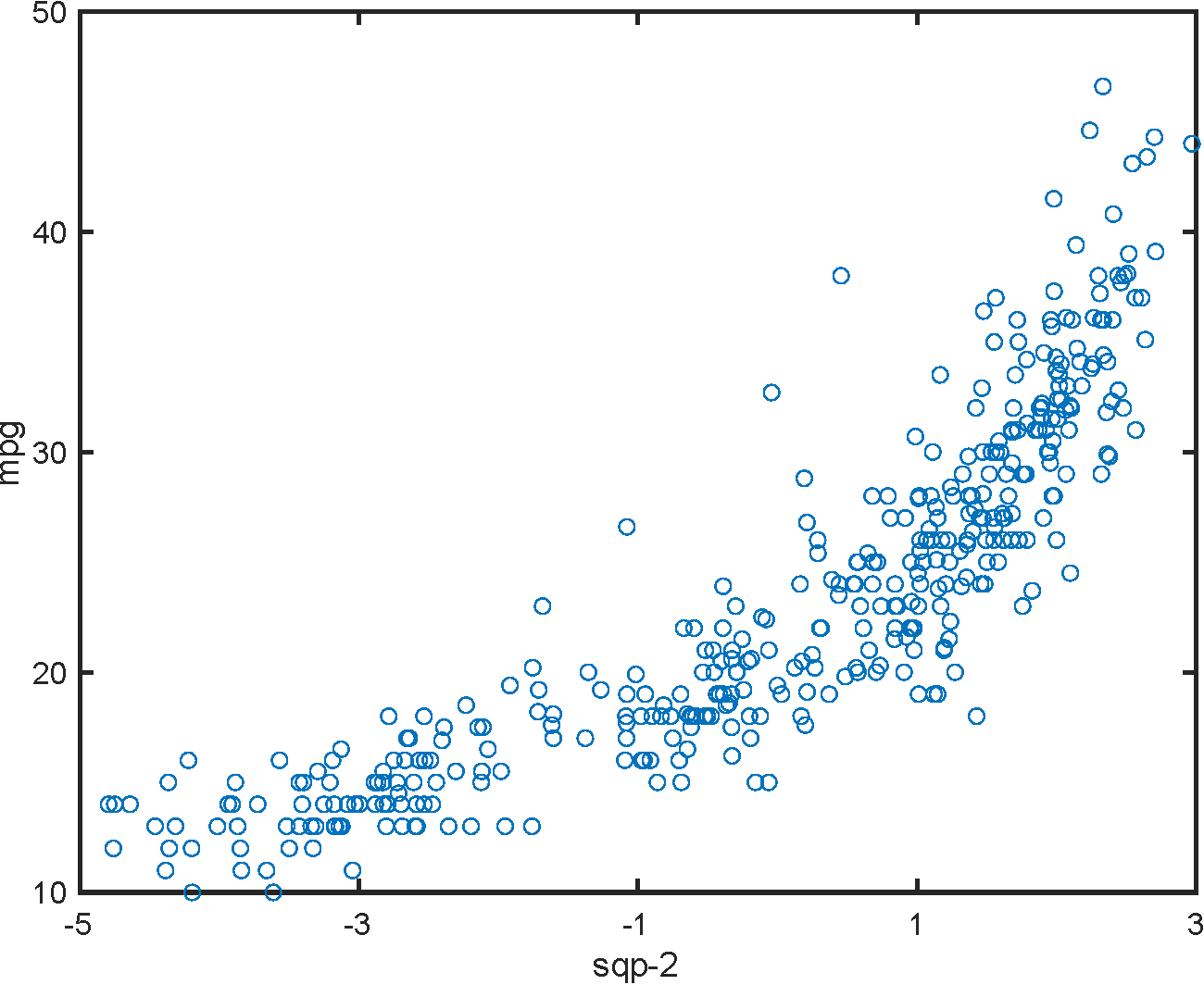}}}\hspace{5pt}
\subfloat[rSDR-index-1]{%
\resizebox*{6cm}{!}{\includegraphics{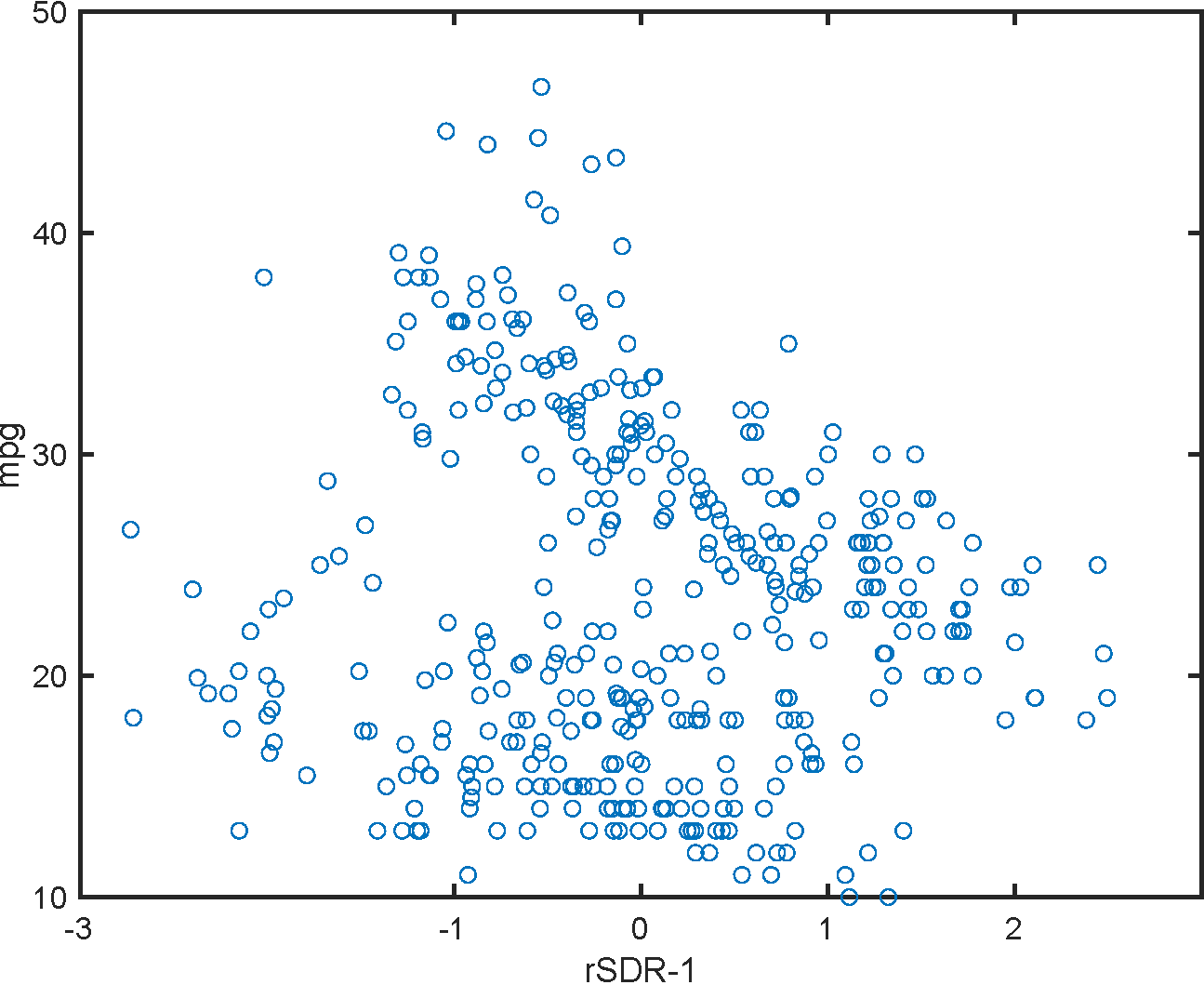}}}\hspace{5pt}
\subfloat[rSDR-index-2]{%
\resizebox*{6cm}{!}{\includegraphics{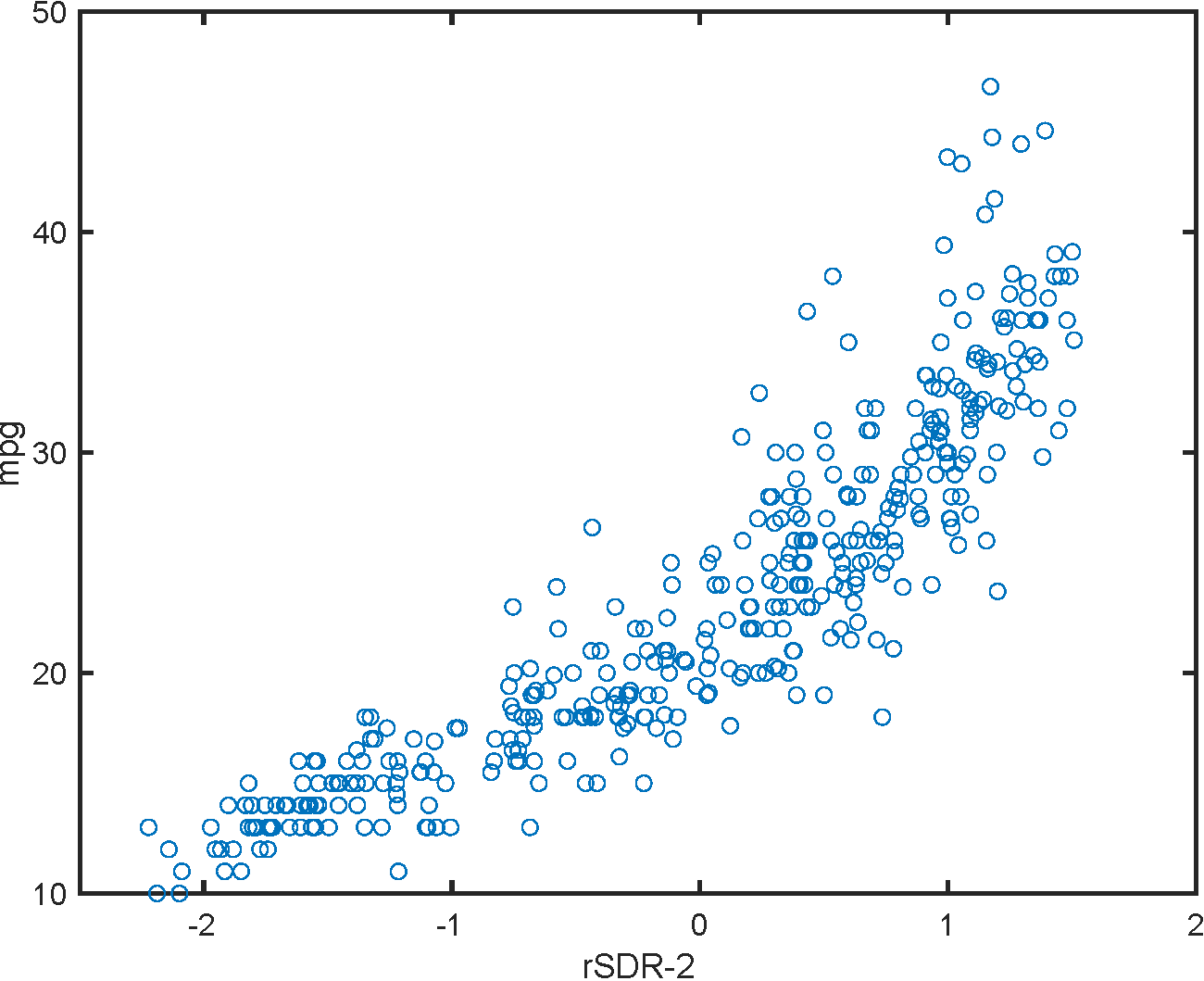}}}\hspace{5pt}
\caption{Scatter plots of ``mpg'' versus the indexes produced by SQP and rSDR ($\alpha=0.2$). }
\label{fig:mpg-plots}
\end{figure}

\begin{table}[t!]
\centering
\renewcommand{\arraystretch}{1.2}
\tbl{Adjusted R-squared and F-value of models from SQP, rSDR in MPG dataset.}{%
\begin{tabular}{ccccc} \toprule
Adjusted R-squared    & rSDR ($\alpha=0.2$) & rSDR ($\alpha=0.5$) & rSDR ($\alpha=1$) \\ \midrule
Linear            & 0.807                & 0.806                & 0.804                \\
Nonlinear         & 0.850                & 0.853                & 0.845                \\ \midrule
F-value              & rSDR ($\alpha=0.2$) & rSDR ($\alpha=0.5$) & rSDR ($\alpha=1$) \\ \midrule
Linear              & 817                  & 816                  & 807                  \\
Nonlinear           & 444                  & 456                  & 427                  \\ \bottomrule
\end{tabular}}
\label{tab:MPG}
\end{table}

\section{Discussion}

In this article, the proposed rSDR using $\alpha$-dCov is robust against outliers in both the response and predictors. Further, the proposed manifold-learning estimation method is less sensitive to the choice of the initial estimators. Both simulation and real-world data applications show that the proposed method outperforms the existing methods. 
The proposed method does not suffer from multicollinearity which could impact the performance of the traditional SDR methods in high-dimensional data analysis. 
 Simulation and real-world data studies show its advantages in terms of computational efficiency and robustness against outliers.

\section*{Acknowledgments}

The authors would like to thank the Editor, the
Associate Editor and the reviewers for their constructive and insightful
comments that greatly improved the manuscript.
This work was
partially supported by NSF grants (DMS-1924792, DMS-2318925 and CNS-1818500).

\section*{Author contribution}

 Hsin-Hsiung Huang: conceptualization, methodology, formal analysis, investigation, 
writing-original draft preparation, writing-review, supervision. 
Feng Yu:
conceptualization, methodology, formal analysis, investigation, writing review, programming and numerical results. 
Teng Zhang: conceptualization, methodology,
formal analysis, investigation, writing-review, supervision. All authors
have read and agreed to the published version of the manuscript.

\section*{Declarations}

The authors declare no competing interests.

\bibliographystyle{apacite}
\bibliography{main_JNS}

\begin{thebibliography}{}

\bibitem [\protect \citeauthoryear {%
P.~Absil%
, Mahony%
\BCBL {}\ \BBA {} Sepulchre%
}{%
P.~Absil%
\ \protect \BOthers {.}}{%
{\protect \APACyear {2009}}%
}]{%
absil2009optimization}
\APACinsertmetastar {%
absil2009optimization}%
\begin{APACrefauthors}%
Absil, P.%
, Mahony, R.%
\BCBL {}\ \BBA {} Sepulchre, R.%
\end{APACrefauthors}%
\unskip\
\newblock
\APACrefYear{2009}.
\newblock
\APACrefbtitle {Optimization Algorithms on Matrix Manifolds} {Optimization
  algorithms on matrix manifolds}.
\newblock
\APACaddressPublisher{}{Princeton University Press}.
\newblock
\begin{APACrefURL} \url{https://books.google.com/books?id=NSQGQeLN3NcC}
  \end{APACrefURL}
\PrintBackRefs{\CurrentBib}

\bibitem [\protect \citeauthoryear {%
P\BHBI A.~Absil%
\ \BBA {} Malick%
}{%
P\BHBI A.~Absil%
\ \BBA {} Malick%
}{%
{\protect \APACyear {2012}}%
}]{%
absil2012projection}
\APACinsertmetastar {%
absil2012projection}%
\begin{APACrefauthors}%
Absil, P\BHBI A.%
\BCBT {}\ \BBA {} Malick, J.%
\end{APACrefauthors}%
\unskip\
\newblock
\APACrefYearMonthDay{2012}{}{}.
\newblock
{\BBOQ}\APACrefatitle {Projection-like retractions on matrix manifolds}
  {Projection-like retractions on matrix manifolds}.{\BBCQ}
\newblock
\APACjournalVolNumPages{SIAM Journal on Optimization}{22}{1}{135--158}.
\PrintBackRefs{\CurrentBib}

\bibitem [\protect \citeauthoryear {%
Cook%
}{%
Cook%
}{%
{\protect \APACyear {2009}}%
}]{%
cook2009regression}
\APACinsertmetastar {%
cook2009regression}%
\begin{APACrefauthors}%
Cook, R\BPBI D.%
\end{APACrefauthors}%
\unskip\
\newblock
\APACrefYear{2009}.
\newblock
\APACrefbtitle {Regression graphics: Ideas for studying regressions through
  graphics} {Regression graphics: Ideas for studying regressions through
  graphics}.
\newblock
\APACaddressPublisher{}{John Wiley \& Sons}.
\PrintBackRefs{\CurrentBib}

\bibitem [\protect \citeauthoryear {%
Cook%
\ \BBA {} Weisberg%
}{%
Cook%
\ \BBA {} Weisberg%
}{%
{\protect \APACyear {1991}}%
}]{%
cook1991discussion}
\APACinsertmetastar {%
cook1991discussion}%
\begin{APACrefauthors}%
Cook, R\BPBI D.%
\BCBT {}\ \BBA {} Weisberg, S.%
\end{APACrefauthors}%
\unskip\
\newblock
\APACrefYearMonthDay{1991}{}{}.
\newblock
{\BBOQ}\APACrefatitle {Discussion of sliced inverse regression for dimension
  reduction} {Discussion of sliced inverse regression for dimension
  reduction}.{\BBCQ}
\newblock
\APACjournalVolNumPages{Journal of the American Statistical
  Association}{86}{414}{328--332}.
\PrintBackRefs{\CurrentBib}

\bibitem [\protect \citeauthoryear {%
Dalmau-Cedeno%
\ \BBA {} Oviedo%
}{%
Dalmau-Cedeno%
\ \BBA {} Oviedo%
}{%
{\protect \APACyear {2017}}%
}]{%
dalmau2017projection}
\APACinsertmetastar {%
dalmau2017projection}%
\begin{APACrefauthors}%
Dalmau-Cedeno, O.%
\BCBT {}\ \BBA {} Oviedo, H.%
\end{APACrefauthors}%
\unskip\
\newblock
\APACrefYearMonthDay{2017}{}{}.
\newblock
{\BBOQ}\APACrefatitle {A projection method for optimization problems on the
  stiefel manifold} {A projection method for optimization problems on the
  stiefel manifold}.{\BBCQ}
\newblock
\BIn{} \APACrefbtitle {Mexican conference on pattern recognition} {Mexican
  conference on pattern recognition}\ (\BPGS\ 84--93).
\PrintBackRefs{\CurrentBib}

\bibitem [\protect \citeauthoryear {%
Fan%
\ \BBA {} Lv%
}{%
Fan%
\ \BBA {} Lv%
}{%
{\protect \APACyear {2008}}%
}]{%
fan2008sure}
\APACinsertmetastar {%
fan2008sure}%
\begin{APACrefauthors}%
Fan, J.%
\BCBT {}\ \BBA {} Lv, J.%
\end{APACrefauthors}%
\unskip\
\newblock
\APACrefYearMonthDay{2008}{}{}.
\newblock
{\BBOQ}\APACrefatitle {Sure independence screening for ultrahigh dimensional
  feature space} {Sure independence screening for ultrahigh dimensional feature
  space}.{\BBCQ}
\newblock
\APACjournalVolNumPages{Journal of the Royal Statistical Society: Series B
  (Statistical Methodology)}{70}{5}{849--911}.
\PrintBackRefs{\CurrentBib}

\bibitem [\protect \citeauthoryear {%
Hall%
\ \BBA {} Li%
}{%
Hall%
\ \BBA {} Li%
}{%
{\protect \APACyear {1993}}%
}]{%
hall1993almost}
\APACinsertmetastar {%
hall1993almost}%
\begin{APACrefauthors}%
Hall, P.%
\BCBT {}\ \BBA {} Li, K\BHBI C.%
\end{APACrefauthors}%
\unskip\
\newblock
\APACrefYearMonthDay{1993}{}{}.
\newblock
{\BBOQ}\APACrefatitle {On almost linearity of low dimensional projections from
  high dimensional data} {On almost linearity of low dimensional projections
  from high dimensional data}.{\BBCQ}
\newblock
\APACjournalVolNumPages{The annals of Statistics}{}{}{867--889}.
\PrintBackRefs{\CurrentBib}

\bibitem [\protect \citeauthoryear {%
B.~Li%
\ \BBA {} Wang%
}{%
B.~Li%
\ \BBA {} Wang%
}{%
{\protect \APACyear {2007}}%
}]{%
li2007directional}
\APACinsertmetastar {%
li2007directional}%
\begin{APACrefauthors}%
Li, B.%
\BCBT {}\ \BBA {} Wang, S.%
\end{APACrefauthors}%
\unskip\
\newblock
\APACrefYearMonthDay{2007}{}{}.
\newblock
{\BBOQ}\APACrefatitle {On directional regression for dimension reduction} {On
  directional regression for dimension reduction}.{\BBCQ}
\newblock
\APACjournalVolNumPages{Journal of the American Statistical
  Association}{102}{479}{997--1008}.
\PrintBackRefs{\CurrentBib}

\bibitem [\protect \citeauthoryear {%
K\BHBI C.~Li%
}{%
K\BHBI C.~Li%
}{%
{\protect \APACyear {1991}}%
{\protect \APACexlab {{\protect \BCnt {1}}}}}]{%
Li1991}
\APACinsertmetastar {%
Li1991}%
\begin{APACrefauthors}%
Li, K\BHBI C.%
\end{APACrefauthors}%
\unskip\
\newblock
\APACrefYearMonthDay{1991{\protect \BCnt {1}}}{}{}.
\newblock
{\BBOQ}\APACrefatitle {Sliced Inverse Regression for Dimension Reduction}
  {Sliced inverse regression for dimension reduction}.{\BBCQ}
\newblock
\APACjournalVolNumPages{Journal of the American Statistical
  Association}{86}{414}{316-327}.
\newblock
\begin{APACrefURL}
  \url{https://www.tandfonline.com/doi/abs/10.1080/01621459.1991.10475035}
  \end{APACrefURL}
\newblock
\begin{APACrefDOI} \doi{10.1080/01621459.1991.10475035} \end{APACrefDOI}
\PrintBackRefs{\CurrentBib}

\bibitem [\protect \citeauthoryear {%
K\BHBI C.~Li%
}{%
K\BHBI C.~Li%
}{%
{\protect \APACyear {1991}}%
{\protect \APACexlab {{\protect \BCnt {2}}}}}]{%
li1991sliced}
\APACinsertmetastar {%
li1991sliced}%
\begin{APACrefauthors}%
Li, K\BHBI C.%
\end{APACrefauthors}%
\unskip\
\newblock
\APACrefYearMonthDay{1991{\protect \BCnt {2}}}{}{}.
\newblock
{\BBOQ}\APACrefatitle {Sliced inverse regression for dimension reduction}
  {Sliced inverse regression for dimension reduction}.{\BBCQ}
\newblock
\APACjournalVolNumPages{Journal of the American Statistical
  Association}{86}{414}{316--327}.
\PrintBackRefs{\CurrentBib}

\bibitem [\protect \citeauthoryear {%
R.~Li%
, Zhong%
\BCBL {}\ \BBA {} Zhu%
}{%
R.~Li%
\ \protect \BOthers {.}}{%
{\protect \APACyear {2012}}%
}]{%
li2012feature}
\APACinsertmetastar {%
li2012feature}%
\begin{APACrefauthors}%
Li, R.%
, Zhong, W.%
\BCBL {}\ \BBA {} Zhu, L.%
\end{APACrefauthors}%
\unskip\
\newblock
\APACrefYearMonthDay{2012}{}{}.
\newblock
{\BBOQ}\APACrefatitle {Feature screening via distance correlation learning}
  {Feature screening via distance correlation learning}.{\BBCQ}
\newblock
\APACjournalVolNumPages{Journal of the American Statistical
  Association}{107}{499}{1129--1139}.
\PrintBackRefs{\CurrentBib}

\bibitem [\protect \citeauthoryear {%
Pan%
, Wang%
, Xiao%
\BCBL {}\ \BBA {} Zhu%
}{%
Pan%
\ \protect \BOthers {.}}{%
{\protect \APACyear {2019}}%
}]{%
pan2019generic}
\APACinsertmetastar {%
pan2019generic}%
\begin{APACrefauthors}%
Pan, W.%
, Wang, X.%
, Xiao, W.%
\BCBL {}\ \BBA {} Zhu, H.%
\end{APACrefauthors}%
\unskip\
\newblock
\APACrefYearMonthDay{2019}{}{}.
\newblock
{\BBOQ}\APACrefatitle {A generic sure independence screening procedure} {A
  generic sure independence screening procedure}.{\BBCQ}
\newblock
\APACjournalVolNumPages{Journal of the American Statistical
  Association}{114}{526}{928--937}.
\PrintBackRefs{\CurrentBib}

\bibitem [\protect \citeauthoryear {%
Segal%
, Dahlquist%
\BCBL {}\ \BBA {} Conklin%
}{%
Segal%
\ \protect \BOthers {.}}{%
{\protect \APACyear {2003}}%
}]{%
segal2003regression}
\APACinsertmetastar {%
segal2003regression}%
\begin{APACrefauthors}%
Segal, M\BPBI R.%
, Dahlquist, K\BPBI D.%
\BCBL {}\ \BBA {} Conklin, B\BPBI R.%
\end{APACrefauthors}%
\unskip\
\newblock
\APACrefYearMonthDay{2003}{}{}.
\newblock
{\BBOQ}\APACrefatitle {Regression approaches for microarray data analysis}
  {Regression approaches for microarray data analysis}.{\BBCQ}
\newblock
\APACjournalVolNumPages{Journal of Computational Biology}{10}{6}{961--980}.
\PrintBackRefs{\CurrentBib}

\bibitem [\protect \citeauthoryear {%
Sheng%
\ \BBA {} Yin%
}{%
Sheng%
\ \BBA {} Yin%
}{%
{\protect \APACyear {2013}}%
}]{%
sheng2013direction}
\APACinsertmetastar {%
sheng2013direction}%
\begin{APACrefauthors}%
Sheng, W.%
\BCBT {}\ \BBA {} Yin, X.%
\end{APACrefauthors}%
\unskip\
\newblock
\APACrefYearMonthDay{2013}{}{}.
\newblock
{\BBOQ}\APACrefatitle {Direction estimation in single-index models via distance
  covariance} {Direction estimation in single-index models via distance
  covariance}.{\BBCQ}
\newblock
\APACjournalVolNumPages{Journal of Multivariate Analysis}{122}{}{148--161}.
\PrintBackRefs{\CurrentBib}

\bibitem [\protect \citeauthoryear {%
Sheng%
\ \BBA {} Yin%
}{%
Sheng%
\ \BBA {} Yin%
}{%
{\protect \APACyear {2016}}%
}]{%
sheng2016sufficient}
\APACinsertmetastar {%
sheng2016sufficient}%
\begin{APACrefauthors}%
Sheng, W.%
\BCBT {}\ \BBA {} Yin, X.%
\end{APACrefauthors}%
\unskip\
\newblock
\APACrefYearMonthDay{2016}{}{}.
\newblock
{\BBOQ}\APACrefatitle {Sufficient dimension reduction via distance covariance}
  {Sufficient dimension reduction via distance covariance}.{\BBCQ}
\newblock
\APACjournalVolNumPages{Journal of Computational and Graphical
  Statistics}{25}{1}{91--104}.
\PrintBackRefs{\CurrentBib}

\bibitem [\protect \citeauthoryear {%
Sz{\'e}kely%
\ \BBA {} Rizzo%
}{%
Sz{\'e}kely%
\ \BBA {} Rizzo%
}{%
{\protect \APACyear {2009}}%
}]{%
szekely2009brownian}
\APACinsertmetastar {%
szekely2009brownian}%
\begin{APACrefauthors}%
Sz{\'e}kely, G\BPBI J.%
\BCBT {}\ \BBA {} Rizzo, M\BPBI L.%
\end{APACrefauthors}%
\unskip\
\newblock
\APACrefYearMonthDay{2009}{}{}.
\newblock
{\BBOQ}\APACrefatitle {Brownian distance covariance} {Brownian distance
  covariance}.{\BBCQ}
\newblock
\APACjournalVolNumPages{The annals of applied statistics}{3}{4}{1236--1265}.
\PrintBackRefs{\CurrentBib}

\bibitem [\protect \citeauthoryear {%
Sz{\'e}kely%
\ \BBA {} Rizzo%
}{%
Sz{\'e}kely%
\ \BBA {} Rizzo%
}{%
{\protect \APACyear {2014}}%
}]{%
10.1214/14-AOS1255}
\APACinsertmetastar {%
10.1214/14-AOS1255}%
\begin{APACrefauthors}%
Sz{\'e}kely, G\BPBI J.%
\BCBT {}\ \BBA {} Rizzo, M\BPBI L.%
\end{APACrefauthors}%
\unskip\
\newblock
\APACrefYearMonthDay{2014}{}{}.
\newblock
{\BBOQ}\APACrefatitle {{Partial distance correlation with methods for
  dissimilarities}} {{Partial distance correlation with methods for
  dissimilarities}}.{\BBCQ}
\newblock
\APACjournalVolNumPages{The Annals of Statistics}{42}{6}{2382 -- 2412}.
\newblock
\begin{APACrefURL} \url{https://doi.org/10.1214/14-AOS1255} \end{APACrefURL}
\newblock
\begin{APACrefDOI} \doi{10.1214/14-AOS1255} \end{APACrefDOI}
\PrintBackRefs{\CurrentBib}

\bibitem [\protect \citeauthoryear {%
Sz{\'e}kely%
, Rizzo%
\BCBL {}\ \BBA {} Bakirov%
}{%
Sz{\'e}kely%
\ \protect \BOthers {.}}{%
{\protect \APACyear {2007}}%
}]{%
10.1214/009053607000000505}
\APACinsertmetastar {%
10.1214/009053607000000505}%
\begin{APACrefauthors}%
Sz{\'e}kely, G\BPBI J.%
, Rizzo, M\BPBI L.%
\BCBL {}\ \BBA {} Bakirov, N\BPBI K.%
\end{APACrefauthors}%
\unskip\
\newblock
\APACrefYearMonthDay{2007}{}{}.
\newblock
{\BBOQ}\APACrefatitle {{Measuring and testing dependence by correlation of
  distances}} {{Measuring and testing dependence by correlation of
  distances}}.{\BBCQ}
\newblock
\APACjournalVolNumPages{The Annals of Statistics}{35}{6}{2769 -- 2794}.
\newblock
\begin{APACrefURL} \url{https://doi.org/10.1214/009053607000000505}
  \end{APACrefURL}
\newblock
\begin{APACrefDOI} \doi{10.1214/009053607000000505} \end{APACrefDOI}
\PrintBackRefs{\CurrentBib}

\bibitem [\protect \citeauthoryear {%
Wang%
\ \BBA {} Li%
}{%
Wang%
\ \BBA {} Li%
}{%
{\protect \APACyear {2017}}%
}]{%
wang2017outlier}
\APACinsertmetastar {%
wang2017outlier}%
\begin{APACrefauthors}%
Wang, T.%
\BCBT {}\ \BBA {} Li, Z.%
\end{APACrefauthors}%
\unskip\
\newblock
\APACrefYearMonthDay{2017}{}{}.
\newblock
{\BBOQ}\APACrefatitle {Outlier detection in high-dimensional regression model}
  {Outlier detection in high-dimensional regression model}.{\BBCQ}
\newblock
\APACjournalVolNumPages{Communications in Statistics-Theory and
  Methods}{46}{14}{6947--6958}.
\PrintBackRefs{\CurrentBib}

\bibitem [\protect \citeauthoryear {%
Wen%
\ \BBA {} Yin%
}{%
Wen%
\ \BBA {} Yin%
}{%
{\protect \APACyear {2013}}%
}]{%
wen2012}
\APACinsertmetastar {%
wen2012}%
\begin{APACrefauthors}%
Wen, Z.%
\BCBT {}\ \BBA {} Yin, W.%
\end{APACrefauthors}%
\unskip\
\newblock
\APACrefYearMonthDay{2013}{}{}.
\newblock
{\BBOQ}\APACrefatitle {A feasible method for optimization with orthogonality
  constraints} {A feasible method for optimization with orthogonality
  constraints}.{\BBCQ}
\newblock
\APACjournalVolNumPages{Mathematical Programming}{142}{1}{397--434}.
\newblock
\begin{APACrefURL} \url{https://doi.org/10.1007/s10107-012-0584-1}
  \end{APACrefURL}
\newblock
\begin{APACrefDOI} \doi{10.1007/s10107-012-0584-1} \end{APACrefDOI}
\PrintBackRefs{\CurrentBib}

\bibitem [\protect \citeauthoryear {%
Wold%
, Esbensen%
\BCBL {}\ \BBA {} Geladi%
}{%
Wold%
\ \protect \BOthers {.}}{%
{\protect \APACyear {1987}}%
}]{%
wold1987principal}
\APACinsertmetastar {%
wold1987principal}%
\begin{APACrefauthors}%
Wold, S.%
, Esbensen, K.%
\BCBL {}\ \BBA {} Geladi, P.%
\end{APACrefauthors}%
\unskip\
\newblock
\APACrefYearMonthDay{1987}{}{}.
\newblock
{\BBOQ}\APACrefatitle {Principal component analysis} {Principal component
  analysis}.{\BBCQ}
\newblock
\APACjournalVolNumPages{Chemometrics and intelligent laboratory
  systems}{2}{1-3}{37--52}.
\PrintBackRefs{\CurrentBib}

\bibitem [\protect \citeauthoryear {%
Wu%
\ \BBA {} Chen%
}{%
Wu%
\ \BBA {} Chen%
}{%
{\protect \APACyear {2021}}%
}]{%
wu2021mm}
\APACinsertmetastar {%
wu2021mm}%
\begin{APACrefauthors}%
Wu, R.%
\BCBT {}\ \BBA {} Chen, X.%
\end{APACrefauthors}%
\unskip\
\newblock
\APACrefYearMonthDay{2021}{}{}.
\newblock
{\BBOQ}\APACrefatitle {MM algorithms for distance covariance based sufficient
  dimension reduction and sufficient variable selection} {Mm algorithms for
  distance covariance based sufficient dimension reduction and sufficient
  variable selection}.{\BBCQ}
\newblock
\APACjournalVolNumPages{Computational Statistics \& Data
  Analysis}{155}{}{107089}.
\PrintBackRefs{\CurrentBib}

\bibitem [\protect \citeauthoryear {%
Xia%
}{%
Xia%
}{%
{\protect \APACyear {2008}}%
}]{%
xia2008multiple}
\APACinsertmetastar {%
xia2008multiple}%
\begin{APACrefauthors}%
Xia, Y.%
\end{APACrefauthors}%
\unskip\
\newblock
\APACrefYearMonthDay{2008}{}{}.
\newblock
{\BBOQ}\APACrefatitle {A multiple-index model and dimension reduction} {A
  multiple-index model and dimension reduction}.{\BBCQ}
\newblock
\APACjournalVolNumPages{Journal of the American Statistical
  Association}{103}{484}{1631--1640}.
\PrintBackRefs{\CurrentBib}

\bibitem [\protect \citeauthoryear {%
Xia%
, Tong%
, Li%
\BCBL {}\ \BBA {} Zhu%
}{%
Xia%
\ \protect \BOthers {.}}{%
{\protect \APACyear {2002}}%
}]{%
xia2002adaptive}
\APACinsertmetastar {%
xia2002adaptive}%
\begin{APACrefauthors}%
Xia, Y.%
, Tong, H.%
, Li, W\BPBI K.%
\BCBL {}\ \BBA {} Zhu, L\BHBI X.%
\end{APACrefauthors}%
\unskip\
\newblock
\APACrefYearMonthDay{2002}{}{}.
\newblock
{\BBOQ}\APACrefatitle {An adaptive estimation of dimension reduction space} {An
  adaptive estimation of dimension reduction space}.{\BBCQ}
\newblock
\APACjournalVolNumPages{Journal of the Royal Statistical Society Series B:
  Statistical Methodology}{64}{3}{363--410}.
\PrintBackRefs{\CurrentBib}

\bibitem [\protect \citeauthoryear {%
Yin%
\ \BBA {} Cook%
}{%
Yin%
\ \BBA {} Cook%
}{%
{\protect \APACyear {2005}}%
}]{%
yin2005direction}
\APACinsertmetastar {%
yin2005direction}%
\begin{APACrefauthors}%
Yin, X.%
\BCBT {}\ \BBA {} Cook, R\BPBI D.%
\end{APACrefauthors}%
\unskip\
\newblock
\APACrefYearMonthDay{2005}{}{}.
\newblock
{\BBOQ}\APACrefatitle {Direction estimation in single-index regressions}
  {Direction estimation in single-index regressions}.{\BBCQ}
\newblock
\APACjournalVolNumPages{Biometrika}{92}{2}{371--384}.
\PrintBackRefs{\CurrentBib}

\bibitem [\protect \citeauthoryear {%
Yin%
\ \BBA {} Hilafu%
}{%
Yin%
\ \BBA {} Hilafu%
}{%
{\protect \APACyear {2015}}%
}]{%
yin2015sequential}
\APACinsertmetastar {%
yin2015sequential}%
\begin{APACrefauthors}%
Yin, X.%
\BCBT {}\ \BBA {} Hilafu, H.%
\end{APACrefauthors}%
\unskip\
\newblock
\APACrefYearMonthDay{2015}{}{}.
\newblock
{\BBOQ}\APACrefatitle {Sequential sufficient dimension reduction for large p,
  small n problems} {Sequential sufficient dimension reduction for large p,
  small n problems}.{\BBCQ}
\newblock
\APACjournalVolNumPages{Journal of the Royal Statistical Society: Series B:
  Statistical Methodology}{}{}{879--892}.
\PrintBackRefs{\CurrentBib}

\bibitem [\protect \citeauthoryear {%
Yin%
\ \BBA {} Li%
}{%
Yin%
\ \BBA {} Li%
}{%
{\protect \APACyear {2011}}%
}]{%
Yin11-AOS950}
\APACinsertmetastar {%
Yin11-AOS950}%
\begin{APACrefauthors}%
Yin, X.%
\BCBT {}\ \BBA {} Li, B.%
\end{APACrefauthors}%
\unskip\
\newblock
\APACrefYearMonthDay{2011}{}{}.
\newblock
{\BBOQ}\APACrefatitle {{Sufficient dimension reduction based on an ensemble of
  minimum average variance estimators}} {{Sufficient dimension reduction based
  on an ensemble of minimum average variance estimators}}.{\BBCQ}
\newblock
\APACjournalVolNumPages{The Annals of Statistics}{39}{6}{3392 -- 3416}.
\newblock
\begin{APACrefURL} \url{https://doi.org/10.1214/11-AOS950} \end{APACrefURL}
\newblock
\begin{APACrefDOI} \doi{10.1214/11-AOS950} \end{APACrefDOI}
\PrintBackRefs{\CurrentBib}

\bibitem [\protect \citeauthoryear {%
Yin%
, Li%
\BCBL {}\ \BBA {} Cook%
}{%
Yin%
\ \protect \BOthers {.}}{%
{\protect \APACyear {2008}}%
}]{%
yin2008successive}
\APACinsertmetastar {%
yin2008successive}%
\begin{APACrefauthors}%
Yin, X.%
, Li, B.%
\BCBL {}\ \BBA {} Cook, R\BPBI D.%
\end{APACrefauthors}%
\unskip\
\newblock
\APACrefYearMonthDay{2008}{}{}.
\newblock
{\BBOQ}\APACrefatitle {Successive direction extraction for estimating the
  central subspace in a multiple-index regression} {Successive direction
  extraction for estimating the central subspace in a multiple-index
  regression}.{\BBCQ}
\newblock
\APACjournalVolNumPages{Journal of Multivariate Analysis}{99}{8}{1733--1757}.
\PrintBackRefs{\CurrentBib}

\bibitem [\protect \citeauthoryear {%
J.~Zhang%
\ \BBA {} Chen%
}{%
J.~Zhang%
\ \BBA {} Chen%
}{%
{\protect \APACyear {2019}}%
}]{%
zhang2019robust}
\APACinsertmetastar {%
zhang2019robust}%
\begin{APACrefauthors}%
Zhang, J.%
\BCBT {}\ \BBA {} Chen, X.%
\end{APACrefauthors}%
\unskip\
\newblock
\APACrefYearMonthDay{2019}{}{}.
\newblock
{\BBOQ}\APACrefatitle {Robust sufficient dimension reduction via ball
  covariance} {Robust sufficient dimension reduction via ball
  covariance}.{\BBCQ}
\newblock
\APACjournalVolNumPages{Computational Statistics \& Data
  Analysis}{140}{}{144--154}.
\PrintBackRefs{\CurrentBib}

\bibitem [\protect \citeauthoryear {%
N.~Zhang%
\ \BBA {} Yin%
}{%
N.~Zhang%
\ \BBA {} Yin%
}{%
{\protect \APACyear {2015}}%
}]{%
zhang2015direction}
\APACinsertmetastar {%
zhang2015direction}%
\begin{APACrefauthors}%
Zhang, N.%
\BCBT {}\ \BBA {} Yin, X.%
\end{APACrefauthors}%
\unskip\
\newblock
\APACrefYearMonthDay{2015}{}{}.
\newblock
{\BBOQ}\APACrefatitle {Direction estimation in single-index regressions via
  Hilbert-Schmidt independence criterion} {Direction estimation in single-index
  regressions via hilbert-schmidt independence criterion}.{\BBCQ}
\newblock
\APACjournalVolNumPages{Statistica Sinica}{}{}{743--758}.
\PrintBackRefs{\CurrentBib}

\bibitem [\protect \citeauthoryear {%
Zhu%
\ \protect \BOthers {.}}{%
Zhu%
\ \protect \BOthers {.}}{%
{\protect \APACyear {2019}}%
}]{%
zhu2019orthodr}
\APACinsertmetastar {%
zhu2019orthodr}%
\begin{APACrefauthors}%
Zhu, R.%
, Zhang, J.%
, Zhao, R.%
, Xu, P.%
, Zhou, W.%
\BCBL {}\ \BBA {} Zhang, X.%
\end{APACrefauthors}%
\unskip\
\newblock
\APACrefYearMonthDay{2019}{}{}.
\newblock
{\BBOQ}\APACrefatitle {orthoDr: Semiparametric Dimension Reduction via
  Orthogonality Constrained Optimization} {orthodr: Semiparametric dimension
  reduction via orthogonality constrained optimization}.{\BBCQ}
\newblock
\APACjournalVolNumPages{The R Journal}{Vol. 11/2}{}{}.
\PrintBackRefs{\CurrentBib}

\bibitem [\protect \citeauthoryear {%
Zou%
\ \BBA {} Yuan%
}{%
Zou%
\ \BBA {} Yuan%
}{%
{\protect \APACyear {2008}}%
}]{%
zou2008regularized}
\APACinsertmetastar {%
zou2008regularized}%
\begin{APACrefauthors}%
Zou, H.%
\BCBT {}\ \BBA {} Yuan, M.%
\end{APACrefauthors}%
\unskip\
\newblock
\APACrefYearMonthDay{2008}{}{}.
\newblock
{\BBOQ}\APACrefatitle {Regularized simultaneous model selection in multiple
  quantiles regression} {Regularized simultaneous model selection in multiple
  quantiles regression}.{\BBCQ}
\newblock
\APACjournalVolNumPages{Computational Statistics \& Data
  Analysis}{52}{12}{5296--5304}.
\PrintBackRefs{\CurrentBib}

\end{thebibliography}

\end{document}